\def\changed{\color{black}}
\newcommand{\Replicas}{\mathfrak{R}}
\newcommand{\Faulty}{\mathcal{F}}
\newcommand{\n}{\mathbf{n}}
\newcommand{\f}{\mathbf{f}}
\newcommand{\nf}{\mathbf{nf}}
\newcommand{\CC}{\mathbf{c}}
\newcommand{\ID}[1]{\mathop{\textsf{id}}(#1)}
\newcommand{\Replica}[1][r]{\textsc{#1}}
\newcommand{\Primary}[1][p]{\textsc{#1}}
\newcommand{\Client}[1][c]{\MakeLowercase{#1}}
\newcommand{\Name}[1]{\texttt{#1}}
\newcommand{\PName}[1]{\textsc{#1}}
\newcommand{\BFT}{\PName{bft}}
\newcommand{\ZZ}{\PName{Zyzzyva}}
\newcommand{\PoE}{\PName{PoE}}
\newcommand{\PoEFull}{\textnormal{Proof-of-Execution}}
\newcommand{\pbft}{\PName{Pbft}}
\newcommand{\hotstuff}{\PName{HotStuff}}
\newcommand{\SBFT}{\PName{SBFT}}
\newcommand{\ExpoDB}{\PName{ResilientDB}}
\newcommand{\POW}{\PName{PoW}}
\newcommand{\Transaction}[1][t]{\MakeUppercase{#1}}
\newcommand{\MName}[1]{\textsc{#1}}
\newcommand{\Message}[2]{\textsc{#1}(#2)}
\newcommand{\SignMessage}[2]{\langle#1\rangle_{#2}}
\newcommand{\SignShare}[2]{s\langle#1\rangle_{#2}}
\newcommand{\Hash}[1]{\texttt{D}(#1)}
\newcommand{\ViewCommitted}[4]{\texttt{VCommit}_{#1}(#2, #3, #4)}
\newcommand{\Executed}[4]{\texttt{Execute}_{#1}(#2, #3, #4)}
\newcommand{\abs}[1]{\lvert #1 \rvert}
\newcommand{\union}{\cup}
\newcommand{\intersect}{\cap}
\newcommand{\difference}{\setminus}
\newcommand{\BigO}[1]{\mathcal{O}(#1)}
\newcommand{\GETS}{:=}
\newenvironment{myprotocol}{
    \hrule
    \smallskip
    \footnotesize
    \algsetup{linenosize=\footnotesize}
    \begin{algorithmic}[1]
        \newcommand{\SPACE}{\item[]}
        \newcommand{\TITLE}[2]{\item[] \textbf{\underline{##1}} (##2) \textbf{:}\\[2pt]}
        \makeatletter
            \newcommand{\EVENT}[1]{\STATE \textbf{event} ##1 \textbf{do}}
            \newcommand{\ENDEVENT}{ \STATE \textbf{end event}}
        \makeatother
}{
    \end{algorithmic}
    \smallskip
    \hrule
}
\tikzset{
    >=Stealth,
    dot/.style={circle,scale=0.35,draw=black,fill=black},
    node_text/.append style={font=\strut\bfseries},
    label/.append style={font=\strut\large},
    plot/.append style={baseline,scale=0.475}
}
\pgfplotsset{
    tick label style={font=\huge},
    legend style={font=\huge,cells={anchor=west}},
    title style={font=\huge},
    label style={font=\huge},
    width=265pt,
    height=185pt,
    every axis/.append style={
        ylabel near ticks,
        xlabel near ticks,
        mark size=2pt,
        cycle list name=mycyclelist,
        font=\huge,
        legend pos=north east
    },
    barstyle/.append style={
        width=5.6cm,
        ybar,
        bar width={1cm},
        enlarge x limits=0.4,
        enlarge y limits={upper=0.025},
        ymin=0,
        xtick=data,
    }
}
\newcommand{\revised}[1]{{\color{black} #1}}
\newcommand{\resultgraph}[5]{\begin{tikzpicture}[plot]
        \begin{axis}[xlabel={#3},ylabel={#4},title={#2},xtick={#5},legend to name={mainlegend},legend columns=-1]
            \addplot table[x={Nodes},y={POE}] {#1};
            \addplot table[x={Nodes},y={PBFT}] {#1};
            \addplot table[x={Nodes},y={SBFT}] {#1};
            \addplot table[x={Nodes},y={HS}] {#1};
            \addplot table[x={Nodes},y={ZYZ}] {#1};
            \addplot table[x={Nodes},y={POE}] {#1};
            \legend{\PoE{},\pbft{},\SBFT{},\hotstuff{},\ZZ{}};
        \end{axis}
    \end{tikzpicture}}
\newcommand{\latgraph}[2]{\begin{tikzpicture}[plot]
        \begin{axis}[xlabel={Latency (\si{\milli\second})},ylabel={Throughput (\si{\text{decisions}\per\second})},title={#2},legend to name={latlegend},legend columns=-1,xtick={10,20,40},ymin=0]
            \addplot table[x expr=\thisrow{LAT}/1000,y={POE}] {#1};
            \addplot table[x expr=\thisrow{LAT}/1000,y={PBFT}] {#1};
            \addplot table[x expr=\thisrow{LAT}/1000,y={HS}] {#1};
            \legend{\PoE{},\pbft{},\hotstuff{}};
        \end{axis}
    \end{tikzpicture}}
\newcommand{\latspecialgraph}[2]{\begin{tikzpicture}[plot]
        \begin{axis}[xlabel={Latency (\si{\milli\second})},ylabel={Throughput (\si{\text{decisions}\per\second})},title={#2},xtick={10,20,40},ymin=0]
            \addplot table[x expr=\thisrow{LAT}/1000,y={POEOO}] {#1};
            \addplot table[x expr=\thisrow{LAT}/1000,y={PBFTOO}] {#1};
            \legend{\PoE{}$*$,\pbft{}$*$};
        \end{axis}
    \end{tikzpicture}}
\newcommand{\axistput}{Throughput (\si{\text{txn}\per\second})}
\newcommand{\axislat}{Latency (\si{\second})}
\newcommand{\axisnodes}{Number of replicas ($\n$)}
\newcommand{\axisbatches}{Batch size}
\newcommand{\axisticksnodes}{4,16,32,64,91}
\newcommand{\axisticksbatches}{10,50,100,200,400}
\begin{document}

\title{Proof-of-Execution: Reaching Consensus through Fault-Tolerant Speculation}

\author{Suyash Gupta \quad{} Jelle Hellings \quad{} Sajjad Rahnama \quad{} Mohammad Sadoghi}%
\affiliation{%
\institution{Exploratory Systems Lab}
\institution{Department of Computer Science}
\institution{University of California, Davis}
}

\renewcommand{\shortauthors}{}

\begin{abstract}
Multi-party data management and blockchain systems require data sharing among participants. To provide resilient and consistent data sharing, transactions engines rely on Byzantine Fault-Tolerant consensus (\BFT{}), which enables operations during failures and malicious behavior. Unfortunately, existing \BFT{} protocols are unsuitable for high-throughput applications due to their high computational costs, high communication costs, high client latencies, and/or reliance on twin-paths and non-faulty clients.

In this paper, we present the \emph{Proof-of-Execution consensus protocol} (\PoE{}) that alleviates these challenges. At the core of \PoE{} are \emph{out-of-order} processing and \emph{speculative execution}, which allow \PoE{} to execute transactions before consensus is reached among the replicas. With these techniques, \PoE{} manages to reduce the costs of \BFT{} in normal cases, while guaranteeing reliable consensus for clients in all cases.  We envision the use of \PoE{} in high-throughput multi-party data-management and blockchain systems. To validate this vision, we implement \PoE{} in our efficient \ExpoDB{} fabric and extensively evaluate \PoE{} against several state-of-the-art \BFT{} protocols. {\changed{} Our evaluation showcases that \PoE{} achieves  up-to-$80\%$ higher throughputs than existing \BFT{} protocols in the presence of failures}.

\end{abstract}

\maketitle

\section{Introduction}  
\label{s:intro}

In \emph{federate data management} a single common database is managed by many independent stakeholders (e.g., an industry consortium). In doing so, federated data management can ease data sharing and improve data quality~\cite{dqbottom,dqbook,impactent}. At the core of federated data management is \emph{reaching agreement} on any updates on the common database in an efficient manner, this to enable fast query processing, data retrieval, and data modifications.

One can achieve federated data management by \emph{replicating} the common database among all participant, this by replicating the sequence of transactions that affect the database to all stakeholders. One can do so using commit protocols designed for distributed databases such as two-phase~\cite{2pc} and three-phase commit~\cite{3pc}, or by using crash-resilient replication protocols such as Paxos~\cite{paxos} and Raft~\cite{raft}.

These solutions are error-prone in a federated \emph{decentralized} environment in which each stakeholder manages its own replicas and replicas of each stakeholder can fail (e.g., due to software, hardware, or network failure) or act malicious: \revised{commit protocols and replication protocols can only deal with crashes}. Consequently, recent federated designs propose the usage of Byzantine Fault-Tolerant (\BFT{}) consensus protocols. \BFT{} consensus aims at \emph{ordering client requests among a set of replicas, some of which could be Byzantine, such that all non-faulty replicas reach agreement on a common order for these requests}~\cite{pbftj,zyzzyva,hotstuff,sbft,geobft}. Furthermore, \BFT{} consensus comes with the added benefit of \emph{democracy}, as \BFT{} consensus gives all replicas an equal vote in all agreement decisions, while the resilience of \BFT{} can aid in dealing with the billions of dollars losses associated with prevalent attacks on data management systems~\cite{ecodam}.

Akin to commit protocols, the majority of \BFT{} consensus protocols use a \emph{primary-backup model} in which one replica is designated \emph{the primary} that coordinates agreement, while the remaining replicas act as backups and follow the protocol~\cite{pdbook}. This primary-backup \BFT{} consensus was first popularized by the influential \pbft{} consensus protocol of Castro and Liskov~\cite{pbftj}. The design of \pbft{} requires at least $3\f+1$ replicas to deal with up-to-$\f$ malicious replicas and operates in \emph{three communication phases}, two of which necessitate quadratic communication complexity. As such, \pbft{} is considered costly when compared to commit or replication protocols, which has negatively impacted the usage of \BFT{} consensus in large-scale data management systems~\cite{improv-bft}.

The recent interest in blockchain technology has revived interest in \BFT{} consensus, has led to several new resilient data management systems (e.g.,~\cite{caper,blockchaindb,geobft,blockplane}), and has led to the development of new \BFT{} consensus protocols that promise efficiency at the cost of flexibility (e.g.,~\cite{zyzzyva,sbft,hotstuff,rcc}). Despite the existence of these modern \BFT{} consensus protocols, the majority of \BFT{}-fueled systems~\cite{geobft,caper,blockchaindb,blockplane} still employ the classical time-tested, flexible, and safe design of \pbft{}, however.

In this paper, we explore different design principles that can enable implementing a scalable and reliable 
agreement protocol that shields against malicious attacks.
We use these design principles to introduce \PoEFull{} (\PoE{}), a novel \BFT{} protocol  that achieves 
resilient agreement in just three linear phases.
To concoct \PoE's scalable and resilient design, we start with \pbft{} and successively add {\em four} design elements:

\begin{enumerate}[wide,nosep,label=(I\arabic*),ref={I\arabic*}]

\item \label{req:NDSE} {\bf Non-Divergent Speculative Execution.} 
In \pbft{}, when the primary replica receives a client request, it forwards that request to the backups.
Each backup on receiving a request from the primary agrees to support by broadcasting a \MName{prepare} message. 
When a replica receives \MName{prepare} message from the majority of other replicas, it marks itself as {\em prepared} 
and broadcasts a \MName{commit} message. 
Each replica that has prepared, and receives \MName{commit} messages from a majority of other replicas, executes the request.

Evidently, \pbft{} requires two phases of {\em all-to-all} communication.
Our first ingredient towards faster consensus is speculative execution. 
In \pbft{} terminology, \PoE{} replicas execute requests after they get {\em prepared}, that is, they 
do not broadcast \MName{commit} messages.
This speculative execution is non-divergent as each replica has a partial guarantee--it has prepared--prior to execution.

\item \label{req:SR} {\bf Safe Rollbacks and Robustness under Failures.}
Due to speculative execution, a malicious primary in \PoE{} can ensure that only a
subset of replicas prepare and execute a request. 
Hence, a client may or may not receive a sufficient number of matching responses. 
\PoE{} ensures that if a client receives a \emph{full proof-of-execution}, consisting of responses from a majority of the non-faulty replicas, then such a request persists in time.
Otherwise, \PoE{} permits replicas to {\em rollback} their state if necessary.
This proof-of-execution is the cornerstone of the correctness of \PoE{}.

\item \label{req:SSLC} {\bf Agnostic Signatures and Linear Communication.}
\BFT{} protocols are run among distrusting parties. To provide security,
these protocols employ cryptographic primitives for signing the messages and generating 
message digests.
Prior works have shown that the choice of cryptographic signature scheme can impact the 
performance of the underlying system~\cite{pbftj,resilientdb}.
Hence, we allow replicas to either employ {\em message authentication codes} (\Name{MAC}s) or 
{\em threshold signatures} (\Name{TS}s) for signing~\cite{cryptobook}.
When few replicas are participating in consensus (up to $16$), then a single phase of 
all-to-all communication is inexpensive and using \Name{MAC}s for such setups can make computations cheap.
For larger setups, we employ \Name{TS}s to achieve linear communication complexity.
\Name{TS}s permit us to split a phase of all-to-all communication into {\em two linear phases}~\cite{hotstuff,sbft}.

\item \label{req:ARA} {\bf Avoid Response Aggregation.} 
\SBFT{}~\cite{sbft}, a recently-proposed \BFT{} protocol,
suggests the use of a single replica (designated as the {\em executor}) to act as a response aggregator.
In specific, all replicas execute each client request and send their response to the executor.
It is the duty of the executor to reply to the client and {\em send a proof} that a majority of the replicas 
not only executed this request, but also outputted the same result.
In \PoE{}, we avoid this additional communication between the replicas
by allowing each replica to respond directly to the client.

\end{enumerate}

In specific, we make the following contributions:
\begin{enumerate}
    	\item \revised{We introduce \PoE{}, a novel Byzantine fault-tolerant consensus protocol that uses \emph{speculative execution} to reach agreement among replicas.}
    	\item To guarantee failure recovery in the presence of speculative execution and Byzantine behavior, we introduce a novel view-change protocol that can rollback requests.
        \item \PoE{} supports batching, out-of-order processing, and is signature-scheme agnostic and can be made to employ either \Name{MAC}s or threshold signatures.
    	\item \PoE{} does not rely on non-faulty replicas, clients, or trusted hardware to achieve safe and efficient consensus.
    	\item To validate our vision of using \PoE{} in resilient federated data management systems, we implement \PoE{} and four other \BFT{} protocols (\ZZ{}, \pbft{}, \SBFT{}, and \hotstuff{}) in our efficient \ExpoDB{}\footnote{\ExpoDB{} is open-sourced at https://github.com/resilientdb.} fabric~\cite{geobft,resilientdb,bftbook,vldb-demo,bft-tutorial-vldb20,bft-tutorial-debs20,bft-tutorial-middleware19}.
	\item We extensively evaluate \PoE{} against these protocols on a Google Cloud deployment consisting of $91$ replicas and $\SI{320}{\kilo{}}$ clients under
	(i) no failure,
	(ii) backup failure, 
	(iii) primary failure, 
	(iv) batching of requests,
	(v) zero payload, and
	(vi) scaling the number of replicas.
	Further, to prove the correctness of our results, we also stress test \PoE{} and other protocols in a simulated environment.
    	Our results show that \PoE{} can achieve up to $80\%$ more throughput than existing \BFT{} protocols \revised{in the presence of failures}.

\end{enumerate}

To the best of our knowledge, \PoE{} is the first protocol that achieves consensus in \emph{only two phases} while being able to deal with Byzantine failures and without relying on trusted clients (e.g., \ZZ{}~\cite{zyzzyva}) or on trusted hardware (e.g., \PName{MinBFT}~\cite{minbft}). Hence, \PoE{} can serve as a drop-in replacement of \pbft{} to improve scalability and performance in permissioned blockchain fabrics such as our \ExpoDB{} fabric~\cite{bc-processing,geobft,resilientdb,rcc,bftbook}, MultiChain~\cite{multichain}, and Hyperledger Fabric~\cite{hyperledger-fabric}; in multi-primary meta-protocols such as RCC~\cite{rcc,multibft-disc}; and in sharding protocols such as AHL~\cite{ahl}.

\section{Analysis of Design Principles}\label{sec:anal}
{\changed{}
To arrive at an optimal design for \PoE{}, we studied practices followed by state-of-the-art distributed data management systems and applied their principles to the design of \PoE{} where possible. In Figure~\ref{fig:compare}, we present a comparison of \PoE{} against \emph{four} well-known resilient consensus protocols.

\begin{figure}[t!]
    \centering
    \newcommand{\Bad}{\cellcolor{red!20}}
    \newcommand{\Good}{\cellcolor{green!30}}
    \scalebox{0.7}{
        \begin{tabular}{l||c@{\ \ }c@{\ \ }c@{\ \ }c@{\ \ }c@{\ \ }c}
        Protocol    			& Phases 	& Messages 			& Resilience 	& Requirements\\
        \hline
        \ZZ{}       			& 1      	& $\BigO{\n}$			&\Bad{}0 	&\Bad{}Reliable clients and unsafe \\
        \Good{}\PoE{} (our paper)     	&\Good{}3      	&\Good{}$ \Good{}\BigO{3\n}$	&\Good{}$\f$ 	&\Good{}Sign.\ agnostic\\
        \pbft{}     			&\Bad{}3      	&\Bad{}$\BigO{\n + 2\n^2}$	& $\f$ 		& \\
        \hotstuff{}				&\Bad{}8	& $\BigO{8\n}$  		& $\f$ 		&\Bad{}Sequential Consensus\\
        \SBFT{}				&\Bad{}5  	& $\BigO{5\n}$   		& $0$ 		&\Bad{}Optimistic path\\
        \end{tabular}
    }
    \caption{Comparison of \BFT{} consensus protocols in a system with $\n$ replicas of which $\f$ are faulty. The costs given are for the normal-case behavior.}
    \label{fig:compare}
\end{figure}

To illustrate the merits of \PoE's design, we first briefly look at \pbft{}. The last phase of \pbft{} ensures that non-faulty replicas only execute requests and inform clients when there is a guarantee that such a transaction will be recovered after any failures. Hence, clients need to wait for only $\f+1$ identical responses, of which at-least one is from a non-faulty replica, to ensure \emph{guaranteed execution}. By eliminating this last phase, replicas speculatively execute requests before obtaining recovery guarantees. This impacts \pbft{}-style consensus in two ways:
\begin{enumerate}
\item First, clients need a way to determine \emph{proof-of-execution} after which they have a guarantee that their requests are executed and maintained by the system. We shall show that such a proof-of-execution can be obtained using $\nf \geq 2\f+1$ identical responses (instead of $\f+1$ responses).
\item Second, as requests are executed before they are guaranteed, replicas need to be able to rollback requests that are dropped during periods of recovery.
\end{enumerate}
\PoE's speculative execution guarantees that requests with a proof-of-execution will never rollback and that only a single request can obtain a proof-of-execution per round. Hence, speculative execution provides the same strong consistency (safety) of \pbft{} in all cases, this at much lower cost under normal operations. Furthermore, we show that speculative execution is fully compatible with other scalable design principles applied to \pbft{}, e.g., batching and out-of-order processing to maximize throughput, even with high message delays. 

\textbf{Out-of-order execution.}
Typical \BFT{} systems follow the \emph{order-execute} model: first replicas agree on a unique order of the client request, and only then they execute the requests in order~\cite{pbftj,zyzzyva,hotstuff,sbft,geobft}. Unfortunately, this prevents these systems from providing any support for concurrent execution. A few \BFT{} systems suggest executing prior to ordering, but even such systems need to re-verify their results prior to committing changes~\cite{hyperledger-fabric,eve}. Our \PoE{} protocol lies between these two extremes: the replicas speculatively execute using only partial ordering guarantees. By doing so, \PoE{} can eliminate communication costs and minimize latencies of typical \BFT{} systems, this without needing to re-verify results in the normal case.

\textbf{Out-of-order processing.}
Although \BFT{} consensus typically executes requests in-order, this does not imply they need to process proposals to order requests sequentially. To maximize throughput, \pbft{} and other primary-backup protocols support \emph{out-of-order processing} in which all available bandwidth of the primary is used to continuously propose requests (even when previous proposals are still being processed by the system). By doing so, out-of-order processing can eliminate the impact of high message delays. To provide out-of-order processing, all replicas will process any request proposed as the $k$-th request whenever $k$ is within some \emph{active window} bounded by a \emph{low-watermark} and \emph{high-watermark}~\cite{pbftj}. These watermarks are increased as the system progresses. The size of this active window is---in practice---only limited by the memory resources available to replicas. As out-of-order processing is an essential technique to deliver high throughputs in environments with high message delays, we have included out-of-order processing in the design of \PoE{}.

\textbf{Twin-path consensus.}
Speculative execution employed by \PoE{} is different that the \emph{twin-path model} utilized by \ZZ{}~\cite{zyzzyva} and \SBFT{}~\cite{sbft}. These twin-path protocols have an optimistic \emph{fast} path that works only if none of the replicas are \emph{faulty} and require aid to determine whether these optimistic condition hold.

In the fast path of \ZZ{}, primaries propose requests, and backups directly execute such proposals and inform the client (without further coordination). The client waits for responses from all $\n$ replicas before marking the request executed. When the client does not receive $\n$ responses, it \emph{timeouts} and sends a message to all replicas, after which the replicas perform an expensive client-dependent \emph{slow-path} recovery process (which is prone to errors when communication is unreliable~\cite{zyzzyva-unsafe}). 

The fast path of \SBFT{} can deal with up to $\CC$ crash-failures using $3\f + 2\CC + 1$ replicas and uses threshold signatures to make communication linear. The fast path of \SBFT{} requires a reliable collector and executor to aggregate messages and to send only \emph{a single} (instead of at-least-$\f+1$) response to the client. Due to aggregating execution, the fast path of \SBFT{} still performs four rounds of communication before the client gets a response, whereas \PoE{} only uses two rounds of communication (or three when \PoE{} uses threshold signatures). If the fast path \emph{timeouts} (e.g., the collector or executor fails), then \SBFT{} falls back to a threshold-version of \pbft{} that takes an additional round before the client gets a response. 
Twin-path consensus is in sharp contrast with the design of \PoE{}, which does not need outside aid (reliable clients, collectors, or executors), and can operate optimally even while dealing with replica failures.

\textbf{Primary rotation.}
To minimize the influence of any single replica on \BFT{} consensus, \hotstuff{} opts to replace the primary after every consensus decision. To efficiently do so, \hotstuff{} uses an extra communication phase (as compared to \pbft{}), which minimizes the cost of primary replacement. Furthermore, \hotstuff{} uses threshold signatures to make its communication linear (resulting in eight communication phases before a client gets responses). The event-based version of \hotstuff{} can overlap phases of consecutive rounds, thereby assuring that consensus of a client request starts in every one-to-all-to-one communication phase. Unfortunately, the primary replacements require that all consensus rounds are performed in a strictly \emph{sequential} manner, eliminating any possibility of \emph{out-of-order processing}. 
}

\section{Proof-of-Execution}
\label{s:dbft}

In our \emph{Proof-of-Execution consensus protocol} (\PoE{}), the primary replica is responsible for proposing transactions requested by clients to all backup replicas.
Each backup replica \emph{speculatively} executes these transactions with the belief that the primary  is behaving correctly.
Speculative execution expedites processing of transactions in all cases. Finally, when malicious behavior is detected, replicas can recover by \emph{rolling back transactions}, which ensures correctness without depending on any twin-path model. 

\subsection{System model and notations}\label{ss:poe_model}
Before providing a full description of our \PoE{} protocol, we present the system model we use and the relevant notations.

A system is a set $\Replicas$ of \emph{replicas} that process client requests.
We assign each replica $\Replica \in \Replicas$ a unique identifier $\ID{\Replica}$ 
with $0 \leq \ID{\Replica} < \abs{\Replicas}$. 
We write $\Faulty \subseteq \Replicas$ to denote the set of \emph{Byzantine replicas} 
that can behave in arbitrary, possibly coordinated and malicious, manners. 
We assume that non-faulty replicas (those in $\Replicas \difference \Faulty$) behave in accordance to the protocol and are deterministic: 
on identical inputs, all non-faulty replicas must produce identical outputs. 
We do not make any assumptions on clients: all client
can be malicious without affecting \PoE{}. We write $\n = \abs{\Replicas}$, $\f = \abs{\Faulty}$, and $\nf = \abs{\Replicas \difference \Faulty}$ 
to denote the number of replicas, faulty replicas, and non-faulty replicas, respectively. 
We assume that $\n > 3\f$ ($\nf > 2\f$).

We assume \emph{authenticated communication}: 
Byzantine replicas are able to impersonate each other, 
but replicas cannot impersonate non-faulty replicas. 
Authenticated communication is a minimal requirement to deal with Byzantine behavior.
Depending on the type of message, we use message authentication codes (\Name{MAC}s) or threshold signatures (\Name{TS}s) to achieve authenticated communication~\cite{cryptobook}. \Name{MAC}s are based on symmetric cryptography in which every pair of communicating nodes has a  {\em secret key}. We expect non-faulty replicas to keep their {\em secret keys} hidden. \Name{TS}s are based on asymmetric cryptography. In specific, each replica holds a distinct {\em private key}, which it can use to create a signature share. Next, one can produce a valid threshold signature given at least $\nf$ such signature shares (from distinct replicas). We write $\SignShare{v}{i}$ to denote the signature share of the $i$-th replica for signing value $v$. Anyone that receives a set $T = \{\SignShare{v}{j} \mid j \in T' \}$ of signature shares for $v$ from $\abs{T'} = \nf$ distinct replicas, can aggregate $T$ into a single signature $\SignMessage{v}{}$. This digital signature can then be verified using a public key. 

We also employ a \emph{collision-resistant cryptographic hash function} $\Hash{\cdot}$ that can map an arbitrary value $v$ to a constant-sized digest $\Hash{v}$~\cite{cryptobook}. We assume that it is practically impossible to find another value $v'$, $v \neq v'$, such that $\Hash{v} = \Hash{v'}$. 
We use notation $v || w$ to denotes the \emph{concatenation} of two values $v$ and $w$.

Next, we define the consensus provided by \PoE{}.
\begin{definition}\label{def:consensus}
A single run of any \emph{consensus protocol} should satisfy the following requirements: 
\begin{description}
    \item[Termination.] Each non-faulty replica executes a transaction.
    \item[Non-divergence.] All non-faulty replicas execute the same transaction.
\end{description}
Termination is typically referred to as \emph{liveness}, whereas non-divergence is 
typically referred to as \emph{safety}. In \PoE{}, execution is speculative: replicas can execute and rollback transactions. To provide safety, \PoE{} provides speculative non-divergence instead of non-divergence:
\begin{description}
    \item[Speculative non-divergence.] If $\nf - \f\geq \f+1$ non-faulty replicas accept and execute the same
transaction $\Transaction$, then all non-faulty replicas will eventually accept and execute $\Transaction$ (after rolling back any other executed transactions).
\end{description}
\end{definition}

To provide \emph{safety}, we do not need any other assumptions on communication or on clients. 
Due to well-known impossibility results for asynchronous consensus~\cite{flp}, we can only provide \emph{liveness} in periods of \emph{reliable bounded-delay communication} 
during which all messages sent by non-faulty replicas will arrive at their destination within some maximum delay.

\subsection{The Normal-Case Algorithm of \PoE{}}\label{ss:poe_normal}

\begin{figure}[t!]
    \centering
    \begin{tikzpicture}[yscale=0.3,xscale=0.95]
        \draw[thick,draw=black!75] (0.75, 11.5) edge[green!50!black!90] ++(8, 0)
                                   (0.75,   7) edge[red!50!black!90] ++(8, 0)
                                   (0.75,   8) edge ++(8, 0)
                                   (0.75,   9) edge ++(8, 0)
                                   (0.75,   10) edge[blue!50!black!90] ++(8, 0);

        \draw[thin,draw=black!75] (1,   7) edge ++(0, 4.5)
                                  (2.5, 7) edge ++(0, 4.5)
				  (4, 7) edge ++(0, 4.5)
                                  (7, 7) edge ++(0, 4.5)
                                  (8.5,   7) edge ++(0, 4.5);

        \node[left] at (0.8, 7) {$\Replica[b]$};
        \node[left] at (0.8, 8) {$\Replica_2$};
        \node[left] at (0.8, 9) {$\Replica_1$};
        \node[left] at (0.8, 10) {$\Primary$};
        \node[left] at (0.8, 11.5) {$\Client$};

        \path[->] (1, 11.5) edge node[above] {$\Transaction$} (2.5, 10)
                  (2.5, 10) edge (4, 9)
                            edge (4, 8)
                            edge (4, 7)
                           
                  (4, 9) edge (7, 10)
			 edge (7, 8)
			 edge (7, 7)

		  (4, 8) edge (7, 10)
			 edge (7, 9)
			 edge (7, 7)

                  (7, 8) edge (8.5, 11.5)
                  (7, 9) edge (8.5, 11.5)
                  (7, 10) edge (8.5, 11.5)
                           ;

        \node[below] at (4.5,6.85) {\footnotesize {(a) \PoE{} using \Name{MAC}s}};

        \draw[thick,draw=black!75] (0.75, 4.5) edge[green!50!black!90] ++(8, 0)
                                   (0.75,   0) edge[red!50!black!90] ++(8, 0)
                                   (0.75,   1) edge ++(8, 0)
                                   (0.75,   2) edge ++(8, 0)
                                   (0.75,   3) edge[blue!50!black!90] ++(8, 0);

        \draw[thin,draw=black!75] (1,   0) edge ++(0, 4.5)
                                  (2.5, 0) edge ++(0, 4.5)
				  (4, 0) edge ++(0, 4.5)
                                  (5.5,   0) edge ++(0, 4.5)
                                  (7, 0) edge ++(0, 4.5)
                                  (8.5,   0) edge ++(0, 4.5);

        \node[left] at (0.8, 0) {$\Replica[b]$};
        \node[left] at (0.8, 1) {$\Replica_2$};
        \node[left] at (0.8, 2) {$\Replica_1$};
        \node[left] at (0.8, 3) {$\Primary$};
        \node[left] at (0.8, 4.5) {$\Client$};

        \path[->] (1, 4.5) edge node[above] {$\Transaction$} (2.5, 3)
                  (2.5, 3) edge (4, 2)
                           edge (4, 1)
                           edge (4, 0)
                           
                  (4, 2) edge (5.5, 3)
                  (4, 1) edge (5.5, 3)

		  (5.5, 3) edge (7, 2)
                           edge (7, 1)
                           edge (7, 0)

                  (7, 1) edge (8.5, 4.5)
                  (7, 2) edge (8.5, 4.5)
                  (7, 3) edge (8.5, 4.5)
                           ;

        \node[below] at (3.2, 0) {\footnotesize \strut\MName{propose}};
        \node[below] at (4.7, 0) {\footnotesize \strut\MName{support}};
        \node[below] at (6.2, 0) {\footnotesize \strut\MName{certify}};
        \node[below] at (7.7, 0) {\footnotesize \strut\MName{inform}};
        \node[below] at (4.5,-1.5) {\footnotesize {(b) \PoE{} using \Name{TS}s.}};
    \end{tikzpicture}
    \caption{Normal-case algorithm of \PoE{}: Client $\Client$ sends its request containing transaction $\Transaction$ to the primary $\Primary$, which proposes this request to all replicas.  
Although replica $\Replica[b]$ is Byzantine, it fails to affect \PoE{}.}\label{fig:poe}
\end{figure}
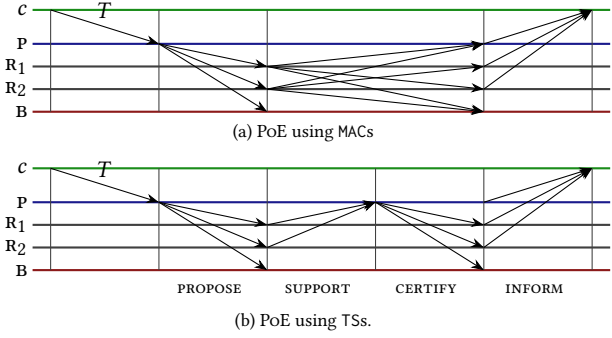

\PoE{} operates in \emph{views} $v = 0, 1, \dots$. 
In view $v$, replica $\Replica$ with $\ID{\Replica} = v \bmod \n$ 
is elected as the primary. The design of \PoE{} relies on authenticated communication, which can be provided using \Name{MAC}s or \Name{TS}s. In Figure~\ref{fig:poe}, we sketch the normal-case working of \PoE{} for both cases. For the sake of brevity, we will describe \PoE{} built on top of \Name{TS}s,  which results in a protocol with low---\emph{linear}---message complexity in the normal case. The full pseudo-code for this algorithm can be found in Figure~\ref{fig:pa}. In Section~\ref{app:mac}, we detail the minimal changes to \PoE{} necessary when switching to \Name{MAC}s.

Consider a view $v$ with primary $\Primary$. To request execution of transaction $\Transaction$, a client $\Client$ signs transaction $\Transaction$ and sends the signed transaction $\SignMessage{\Transaction}{\Client}$ to $\Primary$. The usage of signatures assures that malicious primaries cannot forge transactions. To initiate replication and execution of $\Transaction$ as the $k$-th transaction, the primary proposes $\Transaction$ to all replicas via a \MName{propose} message.

After the $i$-th replica $\Replica$ receives a \MName{propose} message $m$ from $\Primary$, it checks whether at least $\nf$ other replicas received the same proposal $m$ from primary $\Primary$. This check assures $\Replica$ that at least $\nf - \f$ non-faulty replicas received the same proposal, which will play a central role in achieving speculative non-divergence.  To perform this check, each replica \emph{supports} the first proposal $m$ it receives from the primary by computing a \emph{signature share} $\SignShare{m}{i}$ and sending a \MName{support} message containing this share to the primary. 

The primary $\Primary$ waits for \MName{support} messages with valid signature shares from  $\nf$ distinct replicas, which can then be aggregated into a single signature $\SignMessage{m}{}$. After generating such a signature, the primary broadcasts this signature to all replicas via a \MName{certify} message.

After a replica $\Replica$ receives a valid \MName{certify} message, it \emph{view-commits} to $\Transaction$ as the $k$-th transaction in view $v$. The replica logs this view-commit decision as {\changed $\ViewCommitted{\Replica}{\SignMessage{\Transaction}{\Client}}{v}{k}$}. After $\Replica$ view-commits to $\Transaction$, $\Replica$ schedules $\Transaction$ for 
speculative execution as the $k$-th transaction of view $v$. Consequently, $\Transaction$ will be executed by $\Replica$ after all preceding transactions are executed.  We write {\changed $\Executed{\Replica}{\SignMessage{\Transaction}{\Client}}{v}{k}$} to log this execution. 

After execution, $\Replica$ informs the client of the order of execution and 
of execution result $r$ (if any) via a message \MName{inform}. {\changed In turn, client $\Client$ will wait for a \emph{proof-of-execution} for the transaction $\Transaction$ it requested, which consists of identical \MName{inform} messages from $\nf$ distinct replicas. This proof-of-execution guarantees that at least $\nf - \f \geq \f+1$ non-faulty replicas executed $\Transaction$ as the $k$-th transaction and in Section~\ref{subsec:vc}, we will see that such transactions are always preserved by \PoE{} when recovering from failures.}

If client $\Client$ does not know the current primary or does not get any timely response for its requests, then it can broadcast its request $\SignMessage{\Transaction}{\Client}$ to all replicas. The non-faulty replicas will then forward this request to the current primary (if $\Transaction$ is not yet executed) and ensure that the primary initiates successful proposal of this request in a timely manner.


\begin{figure}[t]
    \begin{myprotocol}
        \TITLE{Client-role}{used by client $\Client$ to request transaction $\Transaction$}
        \STATE Send $\SignMessage{\Transaction}{\Client}$ to the primary $\Primary$.
        \STATE Await receipt of messages $\Message{inform}{\SignMessage{\Transaction}{\Client}, v, k, r}$ from 
		$\nf$ replicas.
        \STATE Considers $\Transaction$ executed, with result $r$,  as the $k$-th transaction.\label{fig:pa:cc}
        \SPACE
        \TITLE{Primary-role}{running at the primary $\Primary$ of view $v$, $\ID{\Primary} = v \bmod \n$}
        \STATE Let view $v$ start after execution of the $k$-th transaction.
        \EVENT{$\Primary$ awaits receipt of message $\SignMessage{\Transaction}{\Client}$ from client $\Client$}
	\begin{ALC@g}
            \STATE Broadcast $\Message{propose}{\SignMessage{\Transaction}{\Client}, v, k}$ to all replicas.\label{fig:pa:propose}
            \STATE $k \GETS k + 1$.
	\end{ALC@g}
        \ENDEVENT
	\EVENT{$\Primary$ receives $\nf$ message $\Message{support}{\SignShare{h}{i}, v, k}$ such that:
	    \begin{enumerate}[nosep]
                \item each message was sent by a distinct replica, $i \in \{1,\dots, n\}$; and
		\item All $\SignShare{h}{i}$ in this set can be combined to generate signature $\SignMessage{h}{}$.
            \end{enumerate}
	}
	\begin{ALC@g}
            \STATE Broadcast $\Message{certify}{\SignMessage{h}, v, k}$ to all replicas.\label{fig:pa:_certify}
	\end{ALC@g}
        \ENDEVENT
        \SPACE
        \TITLE{Backup-role}{running at every $i$-th replica $\Replica$.}\label{fig:pa_backup}
        \EVENT{$\Replica$ receives message $m := \Message{propose}{\SignMessage{\Transaction}{\Client}, v, k}$ such that:\label{fig:k-proposal}
            \begin{enumerate}[nosep]
                \item $v$ is the current view;
                \item $m$ is sent by the primary of $v$; and
                \item $\Replica$ did not accept a $k$-th proposal in $v$
            \end{enumerate}
        }\label{fig:pa:sup}
	\begin{ALC@g}
	    \STATE Compute {\changed $h := \Hash{\SignMessage{\Transaction}{\Client} || v || k}$}.
	    \STATE Compute signature share $\SignShare{h}{i}$.
            \STATE Transmit $\Message{support}{\SignShare{h}{i}, v, k}$ to $\Primary$.\label{fig:pa:support}
	\end{ALC@g}
        \ENDEVENT
        \EVENT{$\Replica$ receives messages $\Message{certify}{\SignMessage{h}, v, k}$ from $\Primary$ such that:
            \begin{enumerate}[nosep]
		\item $\Replica$ transmitted $\Message{support}{\SignShare{h}{i}, v, k}$ to $\Primary$; and
		\item $\SignMessage{h}{}$ is a valid threshold signature
            \end{enumerate}
        }\label{fig:pa:vc}
	\begin{ALC@g}
            \STATE View-commit $\Transaction$, the $k$-th transaction of $v$ ({\changed $\ViewCommitted{\Replica}{\SignMessage{\Transaction}{\Client}}{v}{k}$}).
	\end{ALC@g}
        \ENDEVENT
        \EVENT{$\Replica$ logged {\changed $\ViewCommitted{\Replica}{\SignMessage{\Transaction}{\Client}}{v}{k}$} and\\
                \qquad\qquad has logged $\Executed{\Replica}{t'}{v'}{k'}$ for all $0 \leq k' < k$}\label{fig:pa:exec}
	\begin{ALC@g}
            \STATE Execute $\Transaction$ as the $k$-th transaction of $v$ ({\changed $\Executed{\Replica}{\SignMessage{\Transaction}{\Client}}{v}{k}$}).
            \STATE Let $r$ be the result of execution of $\Transaction$ (if there is any result).
            \STATE Send $\Message{inform}{\Hash{\SignMessage{\Transaction}{\Client}}, v, k, r}$ to $\Client$.\label{fig:pa:inform}
	\end{ALC@g}
        \ENDEVENT
    \end{myprotocol}
    \caption{The normal-case algorithm of \PoE{}.}\label{fig:pa}
\end{figure}

To prove correctness of \PoE{} in all cases, we will need the following technical safety-related property of view-commits.

\begin{proposition}\label{prop:non_divergent}
Let $\Replica_i$, $i \in \{1, 2\}$,  be two non-faulty replicas that view-committed to $\SignMessage{\Transaction_i}{\Client_i}$ as the $k$-th transaction of view $v$ ({\changed $\ViewCommitted{\Replica}{\SignMessage{\Transaction}{\Client}}{v}{k}$}). If $\n > 3\f$, then $\SignMessage{\Transaction_1}{\Client_1} = \SignMessage{\Transaction_2}{\Client_2}$.
\end{proposition}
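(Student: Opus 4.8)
The plan is to argue by quorum intersection, the standard workhorse for this kind of safety property. First I would unfold what a view-commit actually requires: by the backup-role (the event at line~\ref{fig:pa:vc}), a non-faulty $\Replica_i$ logs $\ViewCommitted{\Replica_i}{\SignMessage{\Transaction_i}{\Client_i}}{k}{v}$ only after receiving a \MName{certify} message carrying a \emph{valid} threshold signature $\SignMessage{h_i}{}$ on the digest $h_i := \Hash{k || v || \SignMessage{\Transaction_i}{\Client_i}}$. By the threshold-signature properties fixed in the system model, such a signature can be produced only from signature shares contributed by at least $\nf$ distinct replicas. Hence, for each $i \in \{1,2\}$ I would extract a set $Q_i \subseteq \Replicas$ with $\abs{Q_i} = \nf$ of replicas that each transmitted a \MName{support} message carrying a share $\SignShare{h_i}{j}$ for the $k$-th proposal of view $v$ (line~\ref{fig:pa:support}).

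Next I would intersect the two support quorums. Since $\abs{Q_1} = \abs{Q_2} = \nf$ and both lie inside the $\n$ replicas, inclusion--exclusion gives $\abs{Q_1 \intersect Q_2} \geq 2\nf - \n = \nf - \f$, using $\n = \nf + \f$. The hypothesis $\n > 3\f$ is equivalent to $\nf > 2\f$, so $\nf - \f > \f$; that is, the intersection contains strictly more than $\f$ replicas. As at most $\f$ replicas are faulty, $Q_1 \intersect Q_2$ must contain at least one non-faulty replica $\Replica^{*}$.

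The crux is then the single-support behavior of non-faulty replicas. Membership $\Replica^{*} \in Q_1 \intersect Q_2$ means $\Replica^{*}$ issued support shares for both $h_1$ and $h_2$ as the $k$-th proposal of view $v$. But the guard on the support event (line~\ref{fig:k-proposal}, third condition: ``$\Replica$ did not accept a $k$-th proposal in $v$'') lets a non-faulty replica support the $k$-th proposal of a given view at most once. Therefore $\Replica^{*}$ supported a single value, forcing $h_1 = h_2$, i.e.\ $\Hash{k || v || \SignMessage{\Transaction_1}{\Client_1}} = \Hash{k || v || \SignMessage{\Transaction_2}{\Client_2}}$. Collision resistance of $\Hash{\cdot}$ then yields equality of the pre-images, and in particular $\SignMessage{\Transaction_1}{\Client_1} = \SignMessage{\Transaction_2}{\Client_2}$, as claimed.

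The main obstacle is not the counting but making the first step airtight: I must ensure that a valid threshold signature genuinely certifies support from $\nf$ \emph{distinct} replicas and that faulty replicas cannot fabricate shares attributed to non-faulty ones (this is exactly what authenticated communication and unforgeability of shares buy us). Once that is granted, the quorum-intersection bound and the single-support guard combine routinely; the only place where $\n > 3\f$ is truly used is in guaranteeing a non-faulty replica in the overlap, which is the heart of the argument.
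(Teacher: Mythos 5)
Your proof is correct and takes essentially the same quorum-intersection route as the paper: both unfold view-commits into the $\nf$-sized sets of replicas whose shares produced the threshold signatures, and both combine the single-support guard with $\n > 3\f$ to rule out two distinct $k$-th transactions in view $v$. The only cosmetic difference is that you intersect the two quorums directly (an overlap of size at least $\nf - \f > \f$ forces a non-faulty common supporter) while the paper argues by contradiction on the union of the non-faulty portions, and your explicit appeal to collision resistance of $\Hash{\cdot}$ to pass from $h_1 = h_2$ to equal transactions is a slightly more careful finishing step than the paper's.
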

\begin{proof}
{\changed
Replica $\Replica_i$ only view-committed to $\SignMessage{\Transaction_i}{\Client_i}$ after $\Replica_i$ received 
$\Message{certify}{{\SignMessage{h}{}}, v, k}$ from the primary $\Primary$ (Line~\ref{fig:pa:vc} of Figure~\ref{fig:pa}). 
This message includes a threshold signature $\SignMessage{h}{}$, whose construction requires signature shares from a set $S_i$ of $\nf$ distinct replicas. 
Let $X_i = S_i \difference \Faulty$ be the non-faulty replicas in $S_i$. As $\abs{S_i} = \nf$ and $\abs{\Faulty} = \f$, we have $\abs{X_i} \geq \nf - \f$. The non-faulty replicas in $X_i$ will only send a single \MName{support} message for the $k$-th transaction in view $v$ (Line~\ref{fig:pa:sup} of Figure~\ref{fig:pa}). Hence, if $\SignMessage{\Transaction_1}{\Client_1} \neq \SignMessage{\Transaction_2}{\Client_2}$, then $X_1$ and $X_2$ must not overlap and $\nf \geq \abs{X_1 \union X_2} \geq 2(\nf - \f)$ must hold. As $\n = \nf + \f$,  this simplifies to $3\f \geq \n$, which contradicts $\n > 3\f$. Hence, we conclude $\SignMessage{\Transaction_1}{\Client_1} = \SignMessage{\Transaction_2}{\Client_2}$.}
\end{proof}

We will later use Proposition~\ref{prop:non_divergent} to show that \PoE{} provides speculative non-divergence.
Next, we look at typical cases in which the normal-case of \PoE{} is interrupted:

\begin{example}\label{ex:failures}
A malicious primary can try to affect \PoE{} by not conforming to the normal-case algorithm in the following ways:
\begin{enumerate}
    \item By sending proposals for different transactions to different non-faulty replicas. 
In this case, Proposition~\ref{prop:non_divergent} guarantees that at most a single such proposed transaction will get view-committed by any non-faulty replica. 
    \item\label{ex:failures:dark} By keeping some non-faulty replicas in the dark by not sending proposals to them. In this case, the remaining non-faulty replicas can still end up view-committing the transactions as long as at least $\nf-\f$ non-faulty replicas receive proposals: the faulty replicas in $\Faulty$ can take over the role of up to $\f$ non-faulty replicas left in the dark (giving the false illusion that the non-faulty replicas in the dark are malicious).
    \item By preventing execution by not proposing a $k$-th transaction, even though transactions following the $k$-th transaction are being proposed.
\end{enumerate}
\end{example}

When the network is unreliable and messages do not get delivered (or not on time), then the behavior of a non-faulty primary can match that of the malicious primary in the above example. Indeed, failure of the normal-case of \PoE{} has only two possible causes: primary failure and unreliable communication. If communication is unreliable, then there is no way to guarantee continuous service~\cite{flp}. Hence, replicas simply assume failure of the current primary if the normal-case behavior of \PoE{} is interrupted, while the design of \PoE{} guarantees that unreliable communication does not affect the correctness of \PoE{}. 

To deal with primary failure, each replica maintains a timer for each request. If this timer expires (\emph{timeout}) and it has not been able to execute the request, it assumes that the primary is malicious. To deal with such a failure, replicas will replace the primary. Next, we present the \emph{view-change algorithm} that performs primary replacement.

\subsection{The View-Change Algorithm}
\label{subsec:vc}
{\changed
If \PoE{} observes failure of the primary $\Primary$ of view $v$, then \PoE{} will elect a new primary and move to the next view, view $v + 1$, via the \emph{view-change algorithm}. The goals of the view-change are
\begin{enumerate}
    \item to assure that each request that \emph{is considered executed} by any client is preserved under all circumstances; and
    \item to assure that the replicas are able to agree on a new view whenever communication is reliable.
\end{enumerate}
As described in the previous section, a client will consider its request executed if it receives a \emph{proof-of-execution} consisting of identical \MName{inform} responses from at-least $\nf$ distinct replicas. Of these $\nf$ responses, at-most $\f$ can come from faulty replicas. Hence, a client can only consider its request executed whenever the requested transaction was executed (and view-committed) by at-least \emph{$\nf - \f \geq \f+1$} non-faulty replicas in the system. We note the similarity with the view-change algorithm of \pbft{}, which will preserve any request that is \emph{prepared} by at-least $\nf -\f \geq \f+1$ non-faulty replicas.}

The view-change algorithm of \PoE{} consists of three steps. 
First, failure of the current primary $\Primary$ needs to be detected by all non-faulty replicas. 
Second, all replicas exchange information to establish which transactions were included in view $v$ and 
which were not. 
Third, the new primary $\Primary'$ proposes a new view. 
This new view proposal contains a list of the transactions executed in the previous views 
(based on the information exchanged earlier). 
Finally, if the new view proposal is valid, then replicas switch to this view; otherwise, 
replicas detect failure of $\Primary'$ and initiate a view-change for the next view ($v+2$). 
The communication of the view-change algorithm of \PoE{} is sketched in Figure~\ref{fig:vcs} and the full pseudo-code of the algorithm can be found in Figure~\ref{fig:vca}. Next, we discuss each step in detail.

\subsubsection{Failure Detection and View-Change Requests}
If a replica $\Replica$ detects failure of the primary of view $v$, then it halts the normal-case algorithm of \PoE{} for view $v$ and informs all other replicas of this failure by requesting a view-change. 
The replica $\Replica$ does so by broadcasting a message $\Message{vc-request}{v, E}$, 
in which $E$ is a summary of all transactions executed by $\Replica$ (Figure~\ref{fig:vca}, Line~\ref{fig:vca:detect}). Each replica $\Replica$ can detect the failure of primary in two ways:
\begin{enumerate}
\item $\Replica$ \emph{timeouts} while expecting normal-case operations toward executing a client request. E.g., when $\Replica$ forwards a client request to the current primary, and the current primary fails to propose this request on time.
\item $\Replica$ receives \MName{vc-request} messages, indicating that the primary of view $v$ 
failed, from $\f+1$ distinct replicas. 
As at most $\f$ of these messages can come from faulty replicas, 
at least one non-faulty replica must have detected a failure. 
In this case, $\Replica$ joins the view-change (Figure~\ref{fig:vca}, Line~\ref{fig:vca:join}).
\end{enumerate}

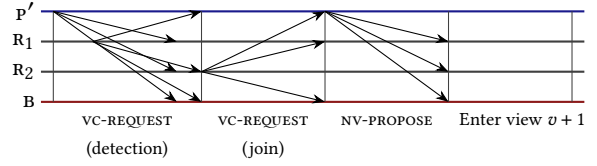
\begin{figure}[t!]
    \centering
    \begin{tikzpicture}[yscale=0.4,xscale=0.65]
        \draw[thick,draw=black!75] (0.75,   0) edge[red!50!black!90] ++(11, 0)
                                   (0.75,   1) edge ++(11, 0)
                                   (0.75,   2) edge ++(11, 0)
                                   (0.75,   3) edge[blue!50!black!90] ++(11, 0);

        \draw[thin,draw=black!75] (1,   0) edge ++(0, 3)
                                  (4,   0) edge ++(0, 3)
                                  (6.5, 0) edge ++(0, 3)
                                  (9,   0) edge ++(0, 3)
                                  (11.5,   0) edge ++(0, 3);

        \node[left] at (0.8, 0) {$\Replica[b]$};
        \node[left] at (0.8, 1) {$\Replica_2$};
        \node[left] at (0.8, 2) {$\Replica_1$};
        \node[left] at (0.8, 3) {$\Primary'$};

        \path[->] (1, 3) edge (3.5, 1)
                         edge (3.5, 2)
                         edge (3.5, 0)
                  (1.8, 2) edge (4, 3)
                           edge (4, 1)
                           edge (4, 0)
                  (4, 1) edge (6.5, 3)
                         edge (6.5, 2)
                         edge (6.5, 0)

                  (6.5, 3) edge (9, 0)
                           edge (9, 1)
                           edge (9, 2)
                           ;

        \node[below,align=center] at (2.5, 0) {\footnotesize\strut\MName{vc-request}\\\footnotesize\strut(detection)};
        \node[below,align=center] at (5.25, 0) {\footnotesize\strut\MName{vc-request}\\\footnotesize\strut(join)};
        \node[below] at (7.75, 0) {\footnotesize \strut\MName{nv-propose}};
        \node[below] at (10.5, 0) {\footnotesize \strut Enter view $v+1$};
    \end{tikzpicture}
    \caption{The current primary $\Replica[b]$ of view $v$ is faulty and needs to be replaced. The next primary, $\Primary'$, and the replica $\Replica_1$ detected this failure first and request view-change via  \MName{vc-request} messages. The replica $\Replica_2$ joins these requests.}\label{fig:vcs}
\end{figure}

\begin{figure}[t!]
    \begin{myprotocol}
        \TITLE{vc-request}{used by replica $\Replica$ to request view-change}
        \EVENT{$\Replica$ detects failure of the primary}\label{fig:vca:detect}
        	\begin{ALC@g}

	    \STATE $\Replica$ halts the normal-case algorithm of Figure~\ref{fig:pa} for view $v$.
	    \STATE $E \GETS \{ (\Message{certify}{\SignMessage{h}, w, k}, \SignMessage{\Transaction}{\Client}) \mid{}$\\
                \qquad\qquad$w \leq v\text{ and }\Executed{\Replica}{\SignMessage{\Transaction}{\Client}}{w}{k}\text{ and }h = \Hash{\SignMessage{\Transaction}{\Client} || w || k}\}$.
            \STATE Broadcast $\Message{vc-request}{v, E}$ to all replicas.
            \end{ALC@g}
        \ENDEVENT
        \EVENT{$\Replica$ receives $\f+1$ messages $\Message{vc-request}{v_i, E_i}$ such that
            \begin{enumerate}[nosep]
                \item each message was sent by a distinct replica; and
                \item $v_i$, $1 \leq i \leq \f+1$, is the current view
            \end{enumerate}
        }\label{fig:vca:join}
        	\begin{ALC@g}
            \STATE $\Replica$ detects failure of the primary (join).
            \end{ALC@g}
        \ENDEVENT
	\SPACE
	\TITLE{On receiving nv-propose}{use by replica $\Replica$}
	\EVENT{$\Replica$ receives $m = \Message{nv-propose}{v + 1, m_1, m_2,..., m_{\nf}}$}\label{fig:vca:nv_ack}
	\begin{ALC@g}
            \IF{$m$ is a valid new-view proposal (similar to creating $\MName{nv-propose}$)}
            \STATE Derive the transactions $N$ for the new-view from $m_1, m_2, \dots, m_{\nf}$.
            \STATE Rollback any executed transactions not included in $N$.
            \STATE Execute the transactions in $N$ not yet executed.
            \STATE Move into view $v+1$ (see Section~\ref{sss:move_nv} for details).
            \ENDIF
            \end{ALC@g}
        \ENDEVENT
        \SPACE
        \TITLE{nv-propose}{used by replica $\Primary'$ that will act as the new primary}
        \EVENT{$\Primary'$ receives $\nf$ messages $m_i = \Message{vc-request}{v_i, E_i}$ such that
            \begin{enumerate}[nosep]
                \item these messages are sent by a set $S$, $\abs{S} = \nf$, of distinct replicas;
                \item for each $m_i$, $1\leq i \leq \nf$, sent by replica $\Replica[q]_i \in S$, 
			$E_i$ consists of a consecutive sequence of entries {\changed $(\Message{certify}{\SignMessage{h}, v, k}, \SignMessage{\Transaction}{\Client})$};
                \item $v_i$, $1 \leq i \leq \nf$, is the current view $v$; and
                \item $\Primary'$ is the next primary ($\ID{\Primary'} = (v+1) \bmod \n$)
            \end{enumerate}
        }\label{fig:vca:nvs}
        	\begin{ALC@g}
            \STATE Broadcast $\Message{nv-propose}{v + 1, m_1, m_2, ..., m_{\nf}}$ to all replicas.
            \end{ALC@g}
        \ENDEVENT
    \end{myprotocol}
    \caption{The view-change algorithm of \PoE{}.}\label{fig:vca}
\end{figure}

\subsubsection{Proposing the New View} 
To start view $v+1$, the new primary $\Primary'$ (with $\ID{\Primary'} = (v+1) \bmod \n$) 
needs to propose a new view by determining a valid list of requests that need to be preserved.
To do so, $\Primary'$ waits until it receives sufficient information. 
{\changed In specific, $\Primary'$ waits until it received \emph{valid} \MName{vc-request} messages from a set
$S \subseteq \Replicas$ of $\abs{S} = \nf$ distinct replicas.}

An $i$-th view-change request $m_i$ is considered valid if it includes a \emph{consecutive sequence} of pairs $(c,  \SignMessage{\Transaction}{\Client})$, where $c$ is a valid \MName{certify} message for request $\SignMessage{\Transaction}{\Client}$.  Such a set $S$ is guaranteed to exist when communication is reliable, 
as all non-faulty replicas will participate in the view-change algorithm.
{\changed The new primary collects the set $S$ of $\abs{S} = \nf$ valid \MName{vc-request} and proposes them in a
new view message \MName{nv-propose} to all replicas.}

\subsubsection{Move to the New View}\label{sss:move_nv}

After a replica $\Replica$ receives a \MName{nv-propose} message containing a new-view proposal
from the new primary $\Primary'$, $\Replica$ validates the content of this message. 
From the set of \MName{vc-request} messages in the new-view proposal, $\Replica$ chooses, for each $k$, the pair $(\Message{certify}{\SignMessage{h}, w, k}, \SignMessage{\Transaction}{\Client})$ proposed in the most-recent view $w$. Furthermore, $\Replica$ determines the total number of such requests $k_{\max}$. Then, $\Replica$ view-commits and executes all $k_{\max}$ chosen requests that happened before view $v+1$. 
Notice that replica $\Replica$ can skip execution of any transaction it already executed. If $\Replica$ executed transactions not included in the new-view proposal, then $\Replica$ needs to \emph{rollback} these transactions before it can proceed executing requests in view $v+1$.
After these steps, $\Replica$ can switch to the new view $v+1$.
In the new view, the new primary $\Primary'$ starts by proposing the $k_{\max} + 1$-th transaction.

{\changed
When moving into the new view, we see  the cost of speculative execution: some replicas can be forced to \emph{rollback execution} of transactions:

\begin{example}
Consider a system with non-faulty replica $\Replica$. When deciding the $k$-th request, communication became unreliable, due to which only $\Replica$ received a \MName{certify} message for request $\SignMessage{\Transaction}{\Client}$. Consequently, $\Replica$ speculatively executes $\Transaction$ and informs the client $\Client$. During the view-change, all other replicas---none of which have a \MName{certify} message for $\SignMessage{\Transaction}{\Client}$---provide their local state to the new primary, which proposes a new view that does not include any $k$-th request. Hence, the new primary will start its view by proposing client request $\SignMessage{\Transaction'}{\Client'}$ as the $k$-th request, which gets accepted. Consequently, $\Replica$ needs to rollback execution of $\Transaction$. Luckily, this is not an issue: the client $\Client$ only got at-most $\f+1 < \nf$ responses for request, does not yet have a proof-of-execution, and, consequently, does not consider $\Transaction$ executed.
\end{example}

In practice, rollbacks can be supported by, e.g., undoing the operations of transaction in reverse order, or by reverting to an old state. For the correct working of \PoE{}, the exact working of rollbacks is not important as long as the execution layer provides support for rollbacks.
}

\subsection{Correctness of \PoE{}}
\label{app:correctness}
First, we show that the normal-case algorithm of \PoE{} provides non-divergent speculative consensus when the primary is non-faulty and communication is reliable.
\begin{theorem}\label{thm:goodcase}
Consider a system in view $v$, in which the first $k-1$ transactions have been executed by all non-faulty replicas, in which the primary is non-faulty, and communication is reliable. If the primary received $\SignMessage{\Transaction}{\Client}$, then the primary can use the algorithm in Figure~\ref{fig:pa} to ensure that
\begin{enumerate}[nosep]
    \item there is non-divergent execution of $\Transaction$;
    \item $\Client$ considers $\Transaction$ executed as the $k$-th transaction; and
    \item $\Client$ learns the result of executing $\Transaction$ (if any),
\end{enumerate}
this independent of any malicious behavior by faulty replicas.
\end{theorem}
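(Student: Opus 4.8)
The plan is to trace the normal-case algorithm of Figure~\ref{fig:pa} phase by phase and show that, under the three hypotheses (non-faulty primary, reliable communication, and the first $k-1$ transactions already executed by all non-faulty replicas), every quorum in the protocol is met using only the $\nf$ non-faulty replicas, so that the $\f$ faulty replicas can neither block nor corrupt progress. First I would observe that the honest primary broadcasts a single $\Message{propose}{\SignMessage{\Transaction}{\Client}, v, k}$ to all replicas (Line~\ref{fig:pa:propose}). Since the primary proposes $\Transaction$ fresh as the $k$-th transaction, no non-faulty replica has yet accepted a $k$-th proposal in view $v$; combined with reliable communication, every one of the $\nf$ non-faulty replicas receives the proposal, computes $h := \Hash{k || v || \SignMessage{\Transaction}{\Client}}$ together with its share $\SignShare{h}{i}$, and returns a $\Message{support}{\SignShare{h}{i}, v, k}$ message (Line~\ref{fig:pa:support}).

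Next I would argue that the primary completes the certify phase using non-faulty replicas alone. It receives at least $\nf$ valid support messages carrying shares for the same $h$ --- exactly the threshold needed to aggregate the signature $\SignMessage{h}{}$ --- from the $\nf$ non-faulty replicas, while any spurious or conflicting shares injected by faulty replicas are simply discarded by the honest primary. Aggregation therefore succeeds for the correct $h$, and the primary broadcasts $\Message{certify}{\SignMessage{h}{}, v, k}$ (Line~\ref{fig:pa:_certify}). By reliable communication, every non-faulty replica that supported $\Transaction$ receives this message, validates the signature, and view-commits to $\Transaction$ as the $k$-th transaction of $v$ (Line~\ref{fig:pa:vc}). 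Since the first $k-1$ transactions are already executed by assumption, the execution guard ($k=0$ or the $(k-1)$-th transaction executed) holds, so each non-faulty replica executes $\Transaction$ as its $k$-th transaction (Line~\ref{fig:pa:exec}).

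I would then close the three claims. Non-divergence is immediate: the honest primary sends an identical proposal to all replicas, so all $\nf$ non-faulty replicas view-commit and execute the same $\Transaction$, and Proposition~\ref{prop:non_divergent} reconfirms that no non-faulty replica can view-commit a different transaction as the $k$-th of $v$. Because non-faulty replicas are deterministic and have executed the identical prefix of $k-1$ transactions, they all produce the same result $r$ and send identical $\Message{inform}{\Hash{\SignMessage{\Transaction}{\Client}}, v, k, r}$ messages (Line~\ref{fig:pa:inform}). The client therefore collects $\nf$ matching inform messages and accepts $\Transaction$ as executed with result $r$ (Line~\ref{fig:pa:cc}), establishing claims (2) and (3).

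The main obstacle is not any single calculation but the care needed to establish the conclusion \emph{independent of faulty behavior}: at each quorum step I must verify that the required threshold ($\nf$ support messages for the primary, $\nf$ inform messages for the client) is attainable from the $\nf$ non-faulty replicas alone, and that malicious shares or messages contributed by the $\f$ faulty replicas cannot derail the honest primary's aggregation or induce a non-faulty replica to view-commit the wrong value. This is where the bound $\nf > 2\f$ (equivalently $\n > 3\f$) and the determinism of non-faulty replicas do the real work.
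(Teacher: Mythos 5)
Your proposal is correct and follows essentially the same route as the paper's proof: trace the normal-case algorithm line by line, show that each quorum ($\nf$ support shares at the primary, $\nf$ matching \MName{inform} messages at the client) is met by the non-faulty replicas alone under reliable communication, and use determinism plus the executed $k-1$ prefix to get identical results. Your extra appeal to Proposition~\ref{prop:non_divergent} is harmless but unnecessary here, since an honest primary sends the same proposal to everyone; the paper instead closes by noting that the $\f$ faulty replicas can forge at most $\f$ divergent \MName{inform} messages, which matches your quorum accounting.
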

\begin{proof}
Each non-faulty primary would follow the algorithm of \PoE{} described in Figure~\ref{fig:pa} and 
send $\Message{propose}{\SignMessage{\Transaction}{\Client}, v, k}$ to all replicas (Line~\ref{fig:pa:propose}). 
In response, all $\nf$ non-faulty replicas will compute a signature share and send a \MName{support} message to the primary (Line~\ref{fig:pa:support}). Consequently, the primary will receive signature shares from $\nf$ replicas and will combine them to generate  a threshold signature $\SignMessage{h}{}$. 
The primary will include this signature $\SignMessage{h}{}$ in a \MName{certify} message and broadcast it to all replicas.
Each replica will successfully verify $\SignMessage{h}{}$ and will view-commit to $\Transaction$ (Line~\ref{fig:pa:vc}). 
As the first $k-1$ transactions have already been executed, every non-faulty replica will execute $\Transaction$. As all non-faulty replicas behave deterministically, execution will yield the same result $r$ (if any) across all non-faulty replicas. Hence, when the non-faulty replicas inform $\Client$, they do so by all sending identical messages $\Message{inform}{\Hash{\SignMessage{\Transaction}{\Client}}, v, k, r}$ to $\Client$ (Line~\ref{fig:pa:exec}--Line~\ref{fig:pa:inform}). 
As all $\nf$ non-faulty replicas executed $\Transaction$, we have non-divergent execution. Finally, as there are at most $\f$ faulty replicas, the faulty replicas can only forge up to $\f$ invalid \MName{inform} messages. Consequently, the client $\Client$ will only receive the message $\Message{inform}{\Hash{\SignMessage{\Transaction}{\Client}}, v, k, r}$ from at least $\nf$ distinct replicas, and will conclude that $\Transaction$ is executed yielding result $r$ (Line~\ref{fig:pa:cc}).
\end{proof}

At the core of the correctness of \PoE{}, under all conditions, is that no replica will rollback requests $\SignMessage{\Transaction}{\Client}$ for which client $\Client$ already received a proof-of-execution. We prove this next:

\begin{proposition}\label{prop:norollback}
Let  $\SignMessage{\Transaction}{\Client}$ be a request for which client $\Client$ already received a proof-of-execution showing that $\Transaction$ was executed as the $k$-th transaction of view $v$. If $\n > 3\f$, then every non-faulty replica that switches to a view $v' > v$ will preserve $\Transaction$ as the $k$-th transaction of view $v$.
\end{proposition}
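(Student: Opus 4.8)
The plan is to prove this by strong induction on the target view $v'$, carrying the invariant that every valid \MName{nv-propose} for a view in the range $(v, v']$ \emph{pins} sequence number $k$ to $\SignMessage{\Transaction}{\Client}$, so that every non-faulty replica entering such a view (re-)executes $\Transaction$ at position $k$ and never permanently rolls it back. First I would unpack the hypothesis: since $\Client$ considers $\Transaction$ executed, it received $\nf$ matching \MName{inform} messages, of which at most $\f$ are forged; hence a set $A$ of at least $\nf - \f \geq \f+1$ non-faulty replicas view-committed and executed $\Transaction$ as the $k$-th transaction of view $v$, and each such replica holds a \MName{certify} message carrying a threshold signature $\SignMessage{h}{}$ for $h = \Hash{k || v || \SignMessage{\Transaction}{\Client}}$. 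Because execution is sequential, each replica in $A$ has executed positions $0,\dots,k$ consecutively.

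For the base case $v' = v+1$, any accepted \MName{nv-propose} is built from a set $S$ of $\nf$ valid \MName{vc-request} messages. Its $\nf - \f$ non-faulty senders form a set $B$, and since $\abs{A} + \abs{B} \geq 2(\nf - \f) > \nf$ (using $\nf > 2\f$), we get $A \intersect B \neq \emptyset$: some non-faulty replica reports $\Transaction$ at slot $k$ together with its valid certify, and its summary is \emph{consecutive} through slot $k$. Hence the \emph{longest consecutive sequence} $E'$ extracted by any recipient reaches at least slot $k$. It then remains to argue the slot-$k$ entry of $E'$ must be $\Transaction$: any competing entry would carry a valid threshold signature for a different request at sequence number $k$, which is ruled out within view $v$ by the quorum counting behind Proposition~\ref{prop:non_divergent} (a threshold signature needs $\nf$ shares, hence $\geq \f+1$ non-faulty signers, and a non-faulty replica signs at most one proposal per slot in a view). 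Thus every non-faulty replica switching to $v+1$ executes $\Transaction$ at slot $k$, possibly after rolling back and re-executing.

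The inductive step repeats this for a view $w$ with $v < w \le v'$, whose \MName{nv-propose} aggregates $\nf$ \MName{vc-request} messages for view $w-1$. If $w-1 = v$ this is the base case; otherwise, by the induction hypothesis every non-faulty replica that reached view $w-1$ already had $\Transaction$ locked at slot $k$, so all $\geq \nf-\f$ non-faulty senders in $S$ report $\Transaction$ at slot $k$, and again $E'$ reaches and pins slot $k$. Crucially, the induction hypothesis also supplies that such replicas set their next sequence number to $k_{\max}+1 > k$ and therefore never issue a fresh support share for slot $k$ in any later view, which is what forbids a conflicting certificate from being forged across views.

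The hard part is exactly this last point, \emph{cross-view pinning}. Within a single view, Proposition~\ref{prop:non_divergent} already forbids two certified transactions at slot $k$; the delicate step is ruling out a valid threshold signature for a different request at sequence number $k$ arising in some intermediate view $w' \in (v, v']$. I would close this using the induction hypothesis together with the unforgeability of threshold signatures: forming such a signature requires support shares from $\geq \f+1$ non-faulty replicas, but once $\Transaction$ occupies slot $k$, every non-faulty replica treats slot $k$ as already executed and starts each subsequent view strictly above it, so no non-faulty replica ever contributes a conflicting slot-$k$ share. A secondary subtlety to verify is that the \emph{longest consecutive sequence} selection always extends through slot $k$ rather than stopping short, which holds because at least one honest, consecutive summary reaching slot $k$ is guaranteed to be present in each accepted proposal.
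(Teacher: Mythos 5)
Your proposal takes essentially the same route as the paper's own proof: the quorum-intersection count (your $A \intersect B$ is the paper's $B \intersect D$, via $2(\nf - \f) > \nf$) forces $\SignMessage{\Transaction}{\Client}$ into every accepted \MName{nv-propose}, and a conflicting slot-$k$ certificate is excluded within view $v$ by Proposition~\ref{prop:non_divergent} and across later views by observing that any $\nf - \f$ non-faulty contributors to such a certificate entered those views with slot $k$ already pinned and hence only support higher sequence numbers. Your explicit strong induction on $v'$ merely formalizes the step the paper invokes implicitly (``as shown above, $m'$ must contain $\SignMessage{\Transaction}{\Client}$''), and, like the paper's ``without loss of generality, $w > v$,'' you leave the symmetric case of a stale certificate from a view $w < v$ unexamined --- so the two proofs agree even in what they gloss over.
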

\begin{proof}
{\changed Client $\Client$ considers $\SignMessage{\Transaction}{\Client}$ executed as the $k$-th transaction of view $v$ when it received 
identical \MName{inform}-messages for $\Transaction$ from a set $A$ of $\abs{A} = \nf$ distinct replicas (Figure~\ref{fig:pa}, Line~\ref{fig:pa:cc}). Let $B = A \difference \Faulty$ be the set of non-faulty replicas in $A$.

Now consider a non-faulty replica $\Replica$ that switches to view $v' > v$. Before doing so, $\Replica$ must have received a valid proposal $m = \Message{nv-propose}{v', m_1, ..., m_{\nf}}$ from the primary of view $v'$. Let $C$ be the set of $\nf$ distinct replicas that provided messages $m_1, \dots, m_{\nf}$ and let $D = C \difference \Faulty$ be the set of non-faulty replicas in $C$.  We have $\abs{B} \geq \nf - \f$ and $\abs{D} \geq \nf - \f$. Hence, using a contradiction argument similar to the one in the proof of Proposition~\ref{prop:non_divergent}, we conclude that there must exists a non-faulty replica $\Replica[q] \in (B \intersect D)$ that executed $\SignMessage{\Transaction}{\Client}$, informed $\Client$, and requested a view-change. 

To complete the proof, we need to show that $\SignMessage{\Transaction}{\Client}$ was proposed and executed in the last view that proposed and view-committed a $k$-th transaction and, hence, that $\Replica[q]$ will include $\SignMessage{\Transaction}{\Client}$ in its \MName{vc-request} message for view $v'$. We do so by induction on the difference $v' - v$. As the base case, we have $v' - v = 1$, in which case no view after $v$ exists yet and, hence, $\SignMessage{\Transaction}{\Client}$ must be the newest $k$-th transaction available to $\Replica[q]$. As the induction hypothesis, we assume that all non-faulty replicas will preserve $\Transaction$ when entering a new view $w$, $v < w \leq w'$. Hence, non-faulty replicas participating in view $w$ will not support any $k$-th transactions proposed in view $w$. Consequently, no \MName{certify} messages can be constructed for any $k$-th transaction in view $w$. Hence, the new-view proposal for $w' + 1$ will include $\SignMessage{\Transaction}{\Client}$, completing the proof.}
\end{proof}

As a direct consequence of the above, we have
\begin{corollary}[Safety of \PoE{}]
\PoE{} provides speculative non-divergence if $\n > 3\f$.
\end{corollary}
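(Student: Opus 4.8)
The plan is to derive this corollary almost entirely from Proposition~\ref{prop:norollback}, which already packages the hard ``no rollback'' reasoning; the remaining work is to match hypotheses and to supply the eventual-execution (liveness) half of the definition. Concretely, I would show that the premise of speculative non-divergence puts us exactly into the situation covered by Proposition~\ref{prop:norollback}, and then argue that every non-faulty replica must eventually reach the state guaranteed there.

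First, I would unpack the premise. Suppose $\nf - \f \geq \f+1$ non-faulty replicas accept and execute the same transaction $\Transaction$ as the $k$-th transaction of view $v$. Note that, since $\n > 3\f$ gives $\nf > 2\f$, this quantity $\nf-\f$ is precisely what is obtained when a client collects $\nf$ matching $\Message{inform}{\Hash{\SignMessage{\Transaction}{\Client}}, v, k, r}$ messages and discards the at most $\f$ that faulty replicas may forge. Each of these non-faulty replicas logs $\Executed{\Replica}{\SignMessage{\Transaction}{\Client}}{k}{v}$ and will request a view-change carrying the corresponding certify entry for $\SignMessage{\Transaction}{\Client}$. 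I would then observe that the proof of Proposition~\ref{prop:norollback} uses only this set $B$ of $\nf-\f$ non-faulty executors (its set $B = A \difference \Faulty$), so the hypothesis ``$\Client$ considers $\SignMessage{\Transaction}{\Client}$ executed as the $k$-th transaction of view $v$'' is met here. Proposition~\ref{prop:non_divergent} additionally makes ``the $k$-th transaction of view $v$'' unambiguous, so these replicas cannot have view-committed competing transactions at position $k$ within view $v$.

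Next, I would invoke Proposition~\ref{prop:norollback} directly: because $\n > 3\f$, every non-faulty replica that switches to any view $v' > v$ executes $\Transaction$ as the $k$-th transaction of view $v$. Hence $\Transaction$ is never displaced from position $k$. This also settles the ``after reverting any other executed transactions'' clause of the definition: any differing transaction that some non-faulty replica may have speculatively executed at position $k$ (or later) is exactly what the new-view procedure of Section~\ref{sss:move_nv} rolls back, and the no-rollback guarantee ensures it is never $\Transaction$ itself that is reverted.

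Finally, I would close the liveness gap in periods of reliable bounded-delay communication. A non-faulty replica that has already executed $\Transaction$ is done; for any that has not, progress dispatches on two cases. Either the current view's primary is non-faulty, in which case Theorem~\ref{thm:goodcase} drives every non-faulty replica to execute $\Transaction$ at position $k$; or failure is detected and the view-change algorithm advances the system to a view $v' > v$, at which point Proposition~\ref{prop:norollback} forces $\Transaction$ into position $k$ for $\Replica$. Together these show all non-faulty replicas eventually accept and execute $\Transaction$, which is exactly speculative non-divergence. I expect the main obstacle to be the first step: carefully reconciling the definition's premise, stated over $\nf-\f$ non-faulty executors, with Proposition~\ref{prop:norollback}'s premise, stated over a client's decision, and noticing that the ``revert'' clause needs no separate argument because it is already subsumed by the no-rollback property rather than requiring new reasoning.
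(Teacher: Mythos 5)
Your core argument takes the same route the paper intends: the paper presents this corollary as a direct consequence of Proposition~\ref{prop:norollback} (with Proposition~\ref{prop:non_divergent} supplying within-view uniqueness), and your opening step---observing that the proof of Proposition~\ref{prop:norollback} only ever uses the set $B$ of at least $\nf-\f$ non-faulty executors, so the client in its hypothesis is dispensable and the definition's premise of $\nf-\f \geq \f+1$ non-faulty executors suffices---is exactly the bridge the paper leaves implicit. That repair of the hypothesis mismatch, and your handling of the ``after reverting'' clause via the new-view rollback of Section~\ref{sss:move_nv}, are both sound and arguably more careful than the paper's one-line justification.

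There is, however, a genuine gap in your eventual-execution dispatch. The alternative ``either the current primary is non-faulty (so Theorem~\ref{thm:goodcase} applies) or failure is detected and the view-change advances to some $v' > v$'' is not exhaustive: a byzantine primary can keep up to $\f$ non-faulty replicas in the dark (Example~\ref{ex:failures}, case~\ref{ex:failures:dark}) while serving all others, so at most $\f$ replicas time out and broadcast \MName{vc-request}---below the $\f+1$ threshold needed for the remaining replicas to join---and no view-change ever occurs. In that scenario Theorem~\ref{thm:goodcase} does not apply (the primary is faulty) and Proposition~\ref{prop:norollback} never fires (the starved replicas never switch views), so your argument does not deliver ``all non-faulty replicas eventually accept and execute $\Transaction$.'' The paper closes precisely this hole with the periodic checkpoint protocol introduced just after the corollary and invoked in the liveness theorem's proof (``if the primary \dots fails to guarantee termination for at most $\f$ non-faulty replicas, then the checkpoint algorithm will assure termination''). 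A complete version of your proof must either cite that checkpoint mechanism for the starved replicas, or---as the paper effectively does---read the corollary as the safety content (uniqueness plus no-rollback) and defer the eventuality to the liveness theorem.
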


{\changed We notice that the view-change algorithm does not deal with minor malicious behavior (e.g., a single replica left in the dark). Furthermore, the presented view-change algorithm will recover all transactions since the start of the system, which will result in unreasonable large messages when many transactions have already been proposed. In practice, both these issues can be resolved by regularly making \emph{checkpoints} (e.g., after every 100 requests) and only including requests since the last checkpoint in each \MName{vc-request} message. To do so, \PoE{} uses a standard fully-decentralized \pbft{}-style checkpoint algorithm that enables the independent checkpointing and recovery of any request that is executed by at least $\f+1$ non-faulty replicas whenever communication is reliable~\cite{pbftj}. Finally, utilizing the view-change algorithm and checkpoints, we prove}

\begin{theorem}[Liveness of \PoE{}]
\PoE{} provides termination in periods of reliable bounded-delay communication if $\n > 3\f$.
\end{theorem}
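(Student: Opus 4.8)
The plan is to reduce liveness to two facts. First, once a non-faulty primary is in place and communication is reliable with some delay bound $\Delta$, Theorem~\ref{thm:goodcase} already guarantees that any pending client request is executed by all non-faulty replicas and that the client is informed; this theorem is the engine that produces progress. Second, whenever the current primary fails to drive the normal-case algorithm to completion, the view-change algorithm of Section~\ref{subsec:vc} must be shown to terminate and, after finitely many repetitions, to install a non-faulty primary. Chaining these two facts yields termination for every client request submitted during the period of reliable bounded-delay communication.

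For the first ingredient I would fix the timeout behavior of non-faulty replicas. Following Section~\ref{subsec:vc}, each replica arms a timer bound to an individual forwarded request; if the request is not executed before the timer expires, the replica declares primary failure and broadcasts a \MName{vc-request}, and the remaining non-faulty replicas join once they collect $\f+1$ such messages (Figure~\ref{fig:vca}, Line~\ref{fig:vca:join}). Because replicas do not know $\Delta$ a priori, I would invoke the standard mechanism of increasing the timeout on each successive view-change, so that eventually the timeout exceeds the round-trip time of one normal-case instance under delay $\Delta$. From that point on, a non-faulty primary can no longer be falsely suspected, so the only possible cause of repeated failure detection is a genuinely faulty primary.

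For the second ingredient I would show each view-change makes progress and that only finitely many occur. Under reliable communication all $\nf$ non-faulty replicas eventually broadcast \emph{valid} \MName{vc-request} messages: each non-faulty replica logs, for every executed transaction, the corresponding \MName{certify} message with its threshold signature (Figure~\ref{fig:vca}, Line~\ref{fig:vca:detect}), so its summary is a consecutive, signature-backed sequence. Hence a non-faulty next primary $\Primary'$ with $\ID{\Primary'} = (v+1) \bmod \n$ receives the $\nf$ valid requests it needs, assembles and broadcasts a valid \MName{nv-propose}, and every non-faulty replica accepts it and moves into view $v+1$ (Section~\ref{sss:move_nv}); Proposition~\ref{prop:norollback} is exactly the guarantee that this transition never discards a request the client already considers executed. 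Since primaries rotate modulo $\n$ and at most $\f$ are faulty, at most $\f+1$ consecutive view-changes are needed to reach a non-faulty primary, and each completes in bounded time under $\Delta$. The checkpoint protocol bounds the size of the summaries, hence of \MName{vc-request} and \MName{nv-propose} messages, so that these too are deliverable within $\Delta$; combining this with Theorem~\ref{thm:goodcase} once the good primary is installed completes termination.

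The hard part will be ruling out a faulty primary that does \emph{just enough} work to keep resetting timers—proposing some transactions while silently stalling one specific request (the withholding behaviors of Example~\ref{ex:failures})—and thereby evades detection indefinitely. I would close this by tying each timer to the individual client request rather than to general primary activity, so a replica that has forwarded a request and not executed it within the timeout declares failure irrespective of unrelated progress; thus withholding even a single request forces replacement. A second subtlety is ensuring view-changes do not loop forever when the newly elected primary is itself faulty and sends an invalid or missing \MName{nv-propose}: here I rely on replicas detecting the invalid new-view proposal and triggering the view-change for $v+2$, together with the $\f+1$ bound above to make the loop finite.
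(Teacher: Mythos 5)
There is a genuine gap: your case analysis misses the regime where the Byzantine primary denies progress to \emph{at most} $\f$ non-faulty replicas, e.g., by keeping them in the dark as in case~(\ref{ex:failures:dark}) of Example~\ref{ex:failures}. In that regime the primary can still complete the normal case with the remaining $\nf - \f$ non-faulty replicas plus the $\f$ faulty ones (which suffice for the $\nf$ \MName{support} shares), so the client is served and only the $\le \f$ dark replicas time out. Your claim that ``withholding even a single request forces replacement'' then fails: joining a view-change requires \MName{vc-request} messages from $\f+1$ distinct replicas (Figure~\ref{fig:vca}, Line~\ref{fig:vca:join}), and assembling \MName{nv-propose} requires $\nf$ of them, so the $\le \f$ complaints from the dark replicas trigger nothing --- yet termination in the sense of Definition~\ref{def:consensus} requires that \emph{these} replicas also execute. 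Per-request timers, which you propose as the fix, do not help here; they only guarantee detection by the replicas that were starved, not a quorum. The paper closes exactly this case with the periodic checkpoint protocol, which lets replicas left in the dark obtain the certified transactions outside the view-change path; you mention checkpoints only to bound the size of \MName{vc-request} and \MName{nv-propose} messages, so this role is absent from your argument.

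Apart from that omission, your treatment of the remaining two cases matches the paper's proof in substance: progress under a non-faulty primary via Theorem~\ref{thm:goodcase}, growing timeouts to eliminate false suspicion once the delay bound holds (the paper uses explicit exponential backoff, and additionally argues that backoff synchronizes all replicas into the same view --- a point you should make explicit, since \MName{nv-propose} needs $\nf$ \MName{vc-request} messages \emph{for the same view} $v$), and rotation modulo $\n$ bounding the number of consecutive view-changes before a non-faulty primary is reached (the paper says at most $\f$ after synchronization; your $\f+1$ is harmlessly loose). Your extra observations --- validity of the \MName{vc-request} summaries via logged \MName{certify} signatures, handling a faulty new primary by moving to view $v+2$, and invoking Proposition~\ref{prop:norollback} for safety of the transition --- are correct and compatible with the paper, but they do not substitute for the missing checkpoint case.
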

\begin{proof}
When the primary is non-faulty, Theorem~\ref{thm:goodcase} guarantees termination as replicas continuously accept and execute requests.  If the primary is Byzantine and fails to guarantee termination for at most $\f$ non-faulty replicas, then the checkpoint algorithm will assure termination of these non-faulty replicas. Finally, if the primary is Byzantine and fails to guarantee termination for at least $\f+1$ non-faulty replicas, then it will be replaced using the view-change algorithm. For the view-change process, each replica will start with a timeout $\delta$ after it receives $\nf$ matching \MName{vc-requests} and double this timeout after each view-change (exponential backoff). When communication becomes reliable, this mechanism guarantees that all replicas will eventually view-change to the same view at the same time. After this point, a non-faulty replica will become primary in at most $\f$ view-changes, after which Theorem~\ref{thm:goodcase} guarantees termination.
\end{proof}

\subsection{Fine-Tuning and Optimizations}\label{ss:fine-optim}

To keep presentation simple, we did not include the following optimizations in the protocol description:
\begin{enumerate}
\item To reach $\nf$ signature shares, the primary can generate one itself. Hence, it only needs $\nf - 1$ shares of other replicas.
\item The \MName{propose}, \MName{support}, \MName{inform}, and \MName{nv-propose} messages are not forwarded and only need \Name{MAC}s to provide message authentication. The \MName{certify} messages need not be signed, as tampering them would invalidate the threshold signature. The \MName{vc-request} messages need to be signed, as they need to be forwarded without tampering.
\end{enumerate}
{\changed Finally, the design of \PoE{} is fully compatible with \emph{out-of-order processing} as a replica only supports proposals for a $k$-th transaction if it had not previously supported another $k$-th proposal (Figure~\ref{fig:pa}, Line~\ref{fig:k-proposal}) and only executes a $k$-th transaction if it has already executed all the preceding transactions (Figure~\ref{fig:pa}, Line~\ref{fig:pa:exec}). As the size of the active out-of-order processing window determines how many client requests are being processed at the same time (without receiving a proof-of-execution), the size of the active window determines the number of transactions that can be rolled back during view-changes.}

\subsection{Designing \PoE{} using MACs}
\label{app:mac}
The design of \PoE{} can be adapted to only use message authentication codes (\Name{MAC}s) to authenticate communication. This will sharply reduce the computational complexity of \PoE{} and eliminate one round of communication, this at the cost of higher \emph{quadratic} overall communication costs (see Figure~\ref{fig:poe}).

The usage of only \Name{MAC}s makes it impossible to obtain threshold signatures or reliably forward messages (as forwarding replicas can tamper with the content of unsigned messages). Hence, using \Name{MAC}s requires changes to how client requests are included in proposals (as client requests are forwarded), to the normal-case algorithm of \PoE{} (which uses threshold signatures), and to the view-change algorithm of \PoE{} (which forwards \MName{vc-request} messages). The changes to the proposal of client requests and to the view-change algorithm can be derived from the strategies used by \pbft{} to support \Name{MAC}s~\cite{pbftj}. Hence, next we only review the changes to the normal-case algorithm of \PoE{}.

Consider a replica $\Replica$ that receives a \MName{propose} message from the primary $\Primary$. Next, $\Replica$ needs to determine whether at least $\nf$ other replicas received the same proposal, which is required to achieve speculative non-divergence (see Proposition~\ref{prop:non_divergent}). When using \Name{MAC}s, $\Replica$ can do so by replacing the all-to-one support and one-to-all certify phases by a single all-to-all \emph{support phase}. In the support phase, each replica agrees to \emph{support} the first proposal  $\Message{propose}{\SignMessage{\Transaction}{\Client}, v, k}$ it receives from the primary by broadcasting a message $\Message{support}{\Hash{\SignMessage{\Transaction}{\Client}}, v, k}$ to all replicas.  After this broadcast, each replica waits until it receives \MName{support} messages, identical to the  message it sent, from $\nf$ distinct replicas. If $\Replica$ receives these messages, it \emph{view-commits} to $\Transaction$  as the $k$-th transaction in view $v$ and schedules $\Transaction$ for execution. We have sketched this algorithm in  Figure~\ref{fig:poe}.

\section{\ExpoDB{} Fabric}  
\label{s:impl}
To test our design principles in practical settings, we implement our \PoE{} protocol 
in our \ExpoDB{} fabric~\cite{bc-processing,geobft,resilientdb,rcc,bftbook}.
\ExpoDB{} provides its users access to a state-of-the-art replicated transactional engine and fulfills the need 
of a high-throughput permissioned blockchain fabric.
\ExpoDB{} helps us to realize the following goals:
(i) implement and test different consensus protocols;
(ii) balance the tasks done by a replica through a {\em parallel pipelined architecture};
(iii) minimize the cost of communication through {\em batching} client transactions; and
(iv) enable use of a secure and efficient ledger. 
Next, we present a brief overview of our \ExpoDB{} fabric.

\ExpoDB{} lays down a {\em client-server} architecture 
where clients send their transactions to servers for processing. 
We use Figure~\ref{fig:pipeline-replica} to
illustrate the multi-threaded pipelined architecture associated with each replica.
At each replica, we spawn multiple {\em input} 
and {\em output} threads for communicating with the network.

{\bf Batching.} 
During our formal description of \PoE{}, we assumed that the \MName{propose} 
message from the primary includes a single client request. 
An effective way to reduce the overall cost of consensus is by aggregating several client requests in a single batch and use one consensus step to reach agreement on all these requests~\cite{pbftj,zyzzyva,sbft}.
To maximize performance, \ExpoDB{} facilitates batching requests at both replicas and clients.
 
At the primary replica, we spawn multiple {\em batch-threads} 
that aggregate clients requests into a batch.
The input-threads at the primary receive client requests, assign them a 
 sequence number and enqueue these requests in the {\em batch-queue}. 
In \ExpoDB{}, all batch-threads share a common {\em lock-free queue}.
When a client request is available, a batch-thread dequeues the request 
and continues adding it to an existing batch until the batch has reached 
a pre-defined size.
Each batching-thread also hashes the requests in a batch to create a unique digest.

All other messages received at a replica are enqueued by the input-thread in 
the {\em work-queue} to be processed by the single {\em worker-thread}. 
Once a replica receive a \MName{certify} message from the primary, it forwards the 
request to the {\em execute-thread} for execution.
Once the execution is complete, the execution-thread creates an \MName{inform} message, which 
is transmitted to the client.

\begin{figure}	
	\centering
    	\includegraphics[width=\columnwidth]{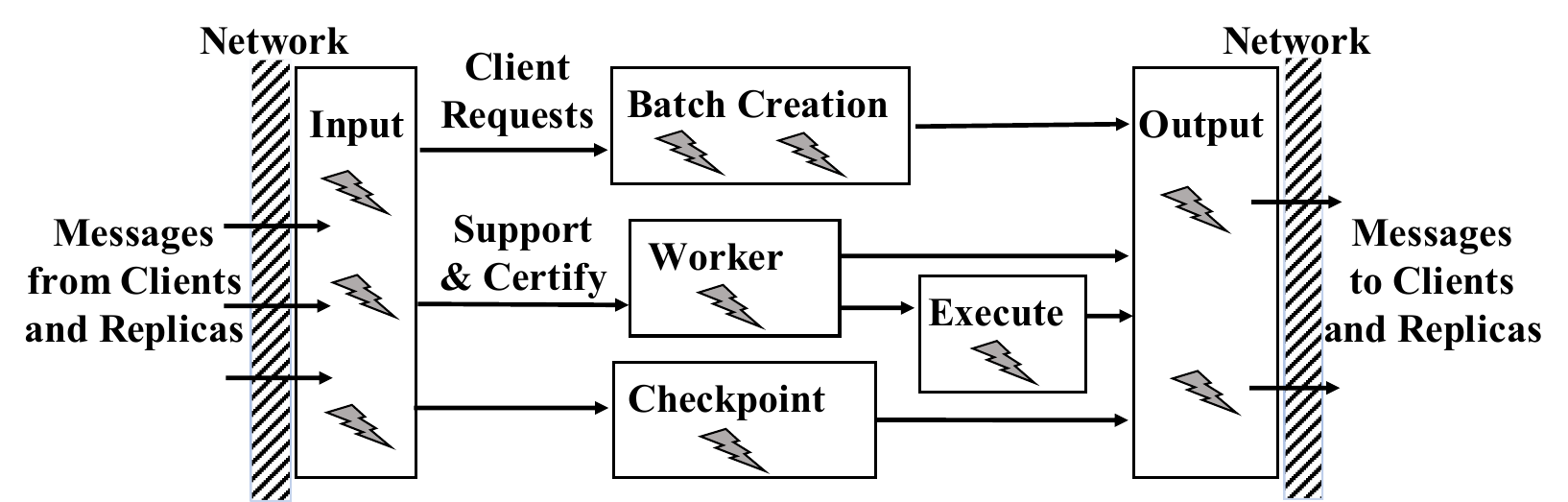}
  	\caption{\small Multi-threaded Pipelines at different replicas.}
        \label{fig:pipeline-replica}
\end{figure}

{\bf Ledger Management.} 
We now explain how we efficiently 
maintain a blockchain ledger across different replicas.
A blockchain is an immutable ledger, where blocks are 
chained as a linked-list. 
An $i$-{th} block can be represented as
$B_i \GETS \{k, d, v,\allowbreak H(B_{i-1}) \}$, in which $k$ is the sequence number of the client request, $d$ the digest of the request, $v$ the view number, and  $H(B_{i-1})$ the hash of the previous block.
In \ExpoDB{}, prior to any consensus, 
we require the {\em first} primary replica to create a 
{\em genesis block}~\cite{bc-processing}.
This genesis block acts as the first block in the blockchain and contains some basic data. We use the hash of the identity of  the initial primary, as this information is available to each participating replicas (eliminating the need for any extra communication to exchange this block).

After the genesis block, each replica can independently 
create the next block in the blockchain.
As stated above, each block corresponds to some batch of transactions.
A block is only created by the execute-thread once
it completes executing a batch of transactions. 
To create a block, the execute-thread hashes the previous 
block in the blockchain and creates a {\em new block}.
To prove the validity of individual blocks, \ExpoDB{} stores the {\em proof-of-accepting the $k$-th request} in the $k$-th block. 
In \PoE{}, such a proof includes the threshold signature sent by the primary as part of the \MName{certify} message.

\section{Evaluation}  
\label{s:eval}
We now analyze our design principles in practice.
To do so, we evaluate our \PoE{} protocol against four state-of-the-art \BFT{} protocols.
There are many \BFT{} protocols we could compare with. Hence, we pick a representative sample: (1) \ZZ---as it has the absolute minimal cost in the fault-free case, (2) \pbft{}---as it is a common baseline (the used design is based on BFTSmart~\cite{bftsmart}), (3) \SBFT{}---as it is a safer variation of \ZZ{}, and (3) \hotstuff{}---as it is a linear-communication protocol that adopts the notion of rotating leaders.
Through our experiments, we want to answer the following questions:
\begin{enumerate}[label=(Q\arabic*),ref={Q\arabic*}i,nosep]
\item\label{eq:1fail} How does \PoE{} fare in comparison with the other protocols under failures?
\item\label{eq:batching} Does \PoE{} benefits from batching client requests?
\item\label{eq:zero-payload} How does \PoE{} perform under zero payload?
\item\label{eq:scale} How scalable is \PoE{} on increasing the number of replicas participating in 
the consensus, in the normal-case?

\end{enumerate}

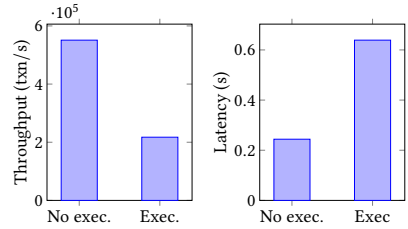
\begin{figure}[t]
	
	\centering
        \begin{tikzpicture}[plot]
            \begin{axis}[barstyle,
                         ylabel={\smash{Throughput (\si{\text{txn}\per\second})}},
                         symbolic x coords={ne,e},
                         xticklabels={No exec.,Exec.}]
                \addplot table[x={method},y={measure}] {\dataTPUTfnEXEC};
            \end{axis}
        \end{tikzpicture}
        \quad
        \begin{tikzpicture}[plot]
            \begin{axis}[barstyle,
                         ylabel={\smash{Latency (\si{\second})}},
                         symbolic x coords={ne,e},
                         xticklabels={No exec.,Exec}]
                \addplot table[x={method},y={measure}] {\dataLATfnEXEC};
            \end{axis}
        \end{tikzpicture}
	\caption{Upper bound on performance when primary only replies to clients (\emph{No exec.}) and when primary executes a request and replies to clients (\emph{Exec.}).}
	\label{fig:upper-tput}
	\label{fig:upper-lat}
	
\end{figure}

{\bf \em Setup.}\label{ss:setup}
We run our experiments on the Google Cloud, and
deploy each replicas on a $c2$ machine having a $16$-core Intel Xeon Cascade Lake CPU running at $\SI{3.8}{\giga\hertz}$ with $\SI{32}{\giga\byte}$ memory.
We deploy up to $\SI{320}{\kilo{}}$ clients on $16$ machines. 
\revised{To collect results after reaching a steady-state,} we run each experiment for $\SI{180}{\second}$: 
the first $\SI{60}{\second}$ are warmup, and measurement results are collected over the next $\SI{120}{\second}$. We average our results over three runs.

{\bf \em Configuration and Benchmarking.} 
For evaluating the protocols, we employed YCSB~\cite{ycsb} from Blockbench's macro benchmarks~\cite{blockbench}.
Each client request queries a YCSB table that holds half a million active records.
We require $90\%$ of the requests to be write queries \revised{as the majority of typical blockchain transactions are updates to existing records}.  
Prior to the experiments, each replica is initialized with an identical copy of the YCSB table. 
The client requests generated by YCSB follow a Zipfian distribution and are heavily skewed (skew factor $0.9$).

Unless {\em explicitly} stated, we use the following configuration for all experiments. 
We perform scaling experiments by varying replicas from $4$ to $91$.
We divide our experiments in two dimensions:
(1) {\em Zero Payload} or {\em Standard Payload}, and
(2) {\em Failures} or {\em Non-Failures}.
We employ batching with a batch size of $100$ \revised{as the percentage increase in throughput on larger batch sizes is small}. 

Under Zero Payload conditions, all replicas execute $100$ dummy instructions per batch, while the
primary sends an empty proposal (and not a batch of $100$ requests).
Under Standard Payload, with a batch size of $100$, the size of $\MName{Propose}$ message is $\SI{5400}{\byte}$, 
of $\MName{Response}$ message is $\SI{1748}{\byte}$, and of other messages is around $\SI{250}{\byte}$. 
For experiments with failures, we force one backup replica to crash. 
Additionally, we present an experiment that illustrates the effect of primary failure.
We measure {\em throughput} as transactions executed per second.
We measure {\em latency} as the time from when the client sends a request to the time when the client receives a response.

{\bf \em Other protocols:} 
\label{ss:other}
We also implement 
\pbft{}, \ZZ{}, \SBFT{} and \hotstuff{} in our \ExpoDB{} fabric. We refer to Section~\ref{sec:anal} for further details on the working of \ZZ{}, \SBFT{}, and \hotstuff{}. Our implementation of \pbft{} is based on the BFTSmart~\cite{bftsmart} framework with the added benefits of out-of-order processing, pipelining, and multi-threading. In both \pbft{} and \ZZ{}, digital signatures are used for authenticating messages sent by the clients, while \Name{MAC}s are used for other messages. Both \SBFT{} and \hotstuff{} require threshold signatures for their communication. 

\subsection{System Characterization}

We first determine the upper bounds on the performance of \ExpoDB{}.  
In Figure~\ref{fig:upper-tput}, we present the maximum throughput and latency 
of \ExpoDB{}
when there is {\em no communication} among the replicas.
We use the term {\em No Execution} to refer to the case where all  clients send their request to 
the primary replica and primary simply responds back to the client. We count every query responded back 
in the system throughput.
We use the term {\em Execution} to refer to the case where the primary replica executes each query 
before responding back to the client.

\revised{The  architecture of \ExpoDB{} (see Section~\ref{s:impl}) states the use of one worker thread. In these experiments, we maximize system performance by allowing up to two threads to work independently at the primary replica 
without ordering any queries.}
Our results indicate that the system can attain high throughputs (up to $\SI{500}{\kilo\text{txn}\per\second}$) and 
can reach low latencies (up to $\SI{0.25}{\second}$).
Notice that if we employ additional worker-threads, our \ExpoDB{} 
fabric can easily attain higher throughput. 

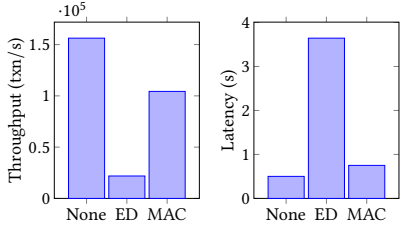
\begin{figure}[t]
	
	\centering
        \begin{tikzpicture}[plot]
            \begin{axis}[barstyle,
                         ylabel={\smash{Throughput (\si{\text{txn}\per\second})}},
                         symbolic x coords={ns,ed,cmac},
                         xticklabels={None,ED,MAC}]
                \addplot table[x={method},y={measure}] {\dataTPUTfnSIGNS};
            \end{axis}
        \end{tikzpicture}
        \quad
        \begin{tikzpicture}[plot]
            \begin{axis}[barstyle,
                         ylabel={\smash{Latency (\si{\second})}},
                         symbolic x coords={ns,ed,cmac},
                         xticklabels={None,ED,MAC}]
                \addplot table[x={method},y={measure}] {\dataLATfnSIGNS};
            \end{axis}
        \end{tikzpicture}
	\caption{System performance using three different signature schemes. In all cases, $\n = 16$ replicas participate in consensus.}
	\label{fig:crypto-lat}
	\label{fig:crypto-tput}
	
\end{figure}

\subsection{Effect of Cryptographic Signatures.}
\label{app:crypto}
\ExpoDB{} enables a flexible design where replicas and clients can employ 
both digital signatures (threshold signatures) and message authentication codes.
This helps us to implement \PoE{} and other consensus protocols in \ExpoDB{}.

To achieve authenticated communication using symmetric cryptography, 
we employ a combination of CMAC and AES~\cite{cryptobook}.
Further, we employ ED25519-based digital signatures to enable asymmetric cryptographic signing.
For generating efficient threshold signature scheme, we use Boneh--Lynn--Shacham (BLS) signatures~\cite{cryptobook}.
To create message digests and for hashing purposes, we use the SHA256 algorithm.

Next, we determine the cost of different cryptographic signing schemes.
For this purpose, we run three different experiments in which
(i) no signature scheme is used (\emph{None});
(ii) everyone uses digital signatures based on ED25519 (\emph{ED}); and
(iii) all replicas use CMAC+AES for signing, while clients sign their message using ED25519 (\emph{MAC}).
In these three experiments, we run \pbft{} consensus among $16$ replicas.
In Figure~\ref{fig:crypto-tput}, we illustrate the 
throughput attained and latency incurred by \ExpoDB{} for the experiments.
Clearly, the system attains its highest throughput when no signatures are employed. 
However, such a system cannot handle malicious attacks.
Further, using just digital signatures for signing messages can prove to be expensive.
An optimal configuration can require clients to sign their messages using digital signatures, 
while replicas can communicate using \Name{MAC}s.

\begin{figure*}
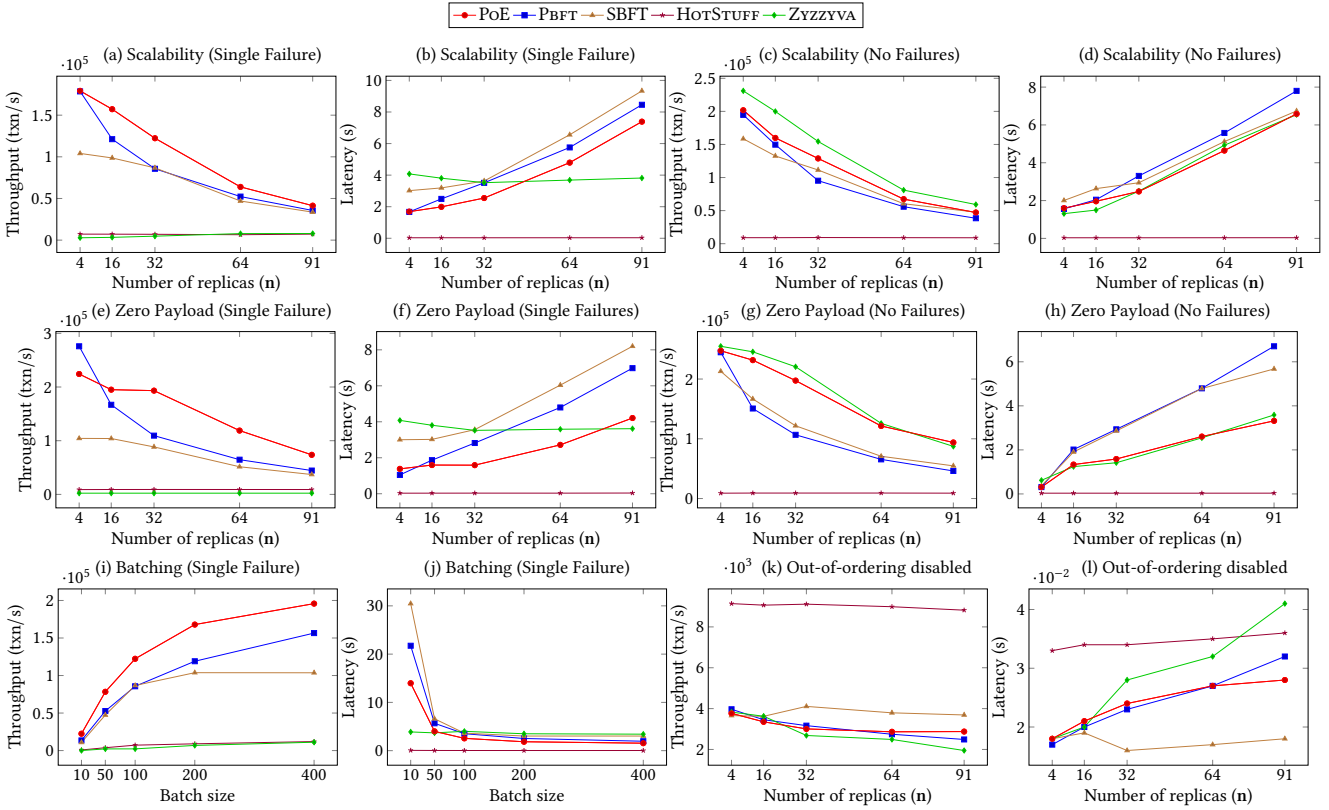


\centering
\scalebox{0.5}{\ref{mainlegend}}\\[5pt]
\resultgraph{\dataTputNodes}{\hspace{9mm}(a) Scalability (Single Failure)}{\axisnodes}{\axistput}{\axisticksnodes}%
\resultgraph{\dataLatNodes}{(b) Scalability (Single Failure)}{\axisnodes}{\axislat}{\axisticksnodes}%
\resultgraph{\dataTputNodesFF}{(c) Scalability (No Failures)}{\axisnodes}{\axistput}{\axisticksnodes}%
\resultgraph{\dataLatNodesFF}{(d) Scalability (No Failures)}{\axisnodes}{\axislat}{\axisticksnodes}\\
\resultgraph{\dataTputNodesZeroPL}{\hspace{9mm}(e) Zero Payload (Single Failure)}{\axisnodes}{\axistput}{\axisticksnodes}%
\resultgraph{\dataLatNodesZeroPL}{(f) Zero Payload (Single Failures)}{\axisnodes}{\axislat}{\axisticksnodes}%
\resultgraph{\dataTputNodesZeroPLFF}{\hspace{9mm}(g) Zero Payload (No Failures)}{\axisnodes}{\axistput}{\axisticksnodes}%
\resultgraph{\dataLatNodesZeroPLFF}{(h) Zero Payload (No Failures)}{\axisnodes}{\axislat}{\axisticksnodes}\\
\resultgraph{\dataTputBatch}{(i) Batching (Single Failure)}{\axisbatches}{\axistput}{\axisticksbatches}%
\resultgraph{\dataLatBatch}{(j) Batching (Single Failure)}{\axisbatches}{\axislat}{\axisticksbatches}%
\resultgraph{\dataTputNodesChain}{$\cdot 10^3$ \hspace{4mm}~~(k) Out-of-ordering disabled}{\axisnodes}{\axistput}{\axisticksnodes}%
\resultgraph{\dataLatNodesChain}{\hspace{9mm}(l) Out-of-ordering disabled}{\axisnodes}{\axislat}{\axisticksnodes}
\caption{Evaluating system throughput and average latency incurred by \PoE{} and other \BFT{} protocols.}
\label{fig:poe-plots}

\end{figure*}

\subsection{Scaling Replicas under Standard Payload}
\label{ss:scale-standard-payload}
In this section, we evaluate scalability of \PoE{} both under backup failure and no failures.

\begin{enumerate}[nosep,wide,label=(\arabic*)]
\item {\bf \em Single Backup Failure.}
We use Figures~\ref{fig:poe-plots}(a) and~\ref{fig:poe-plots}(b) to illustrate the throughput and latency attained 
by the system on running different consensus protocols under a backup failure.
These graphs affirm our claim that \PoE{} attains higher throughput 
and incurs lower latency than all other protocols. 

In case of \pbft{}, each replica participates in two phases of quadratic communication, 
which limits its throughput. 
For the twin-path protocols such as \ZZ{} and \SBFT{}, 
a single failure is sufficient to cause massive reductions in their system throughputs.
Notice that the collector in \SBFT{} and the clients in \ZZ{} have to wait for 
messages from all $\n$ replicas, respectively.
As predicting an optimal value for timeouts is hard~\cite{aadvark,upright}, we chose a very small value for the timeout (\SI{3}{\second}) for replicas and clients.
We justify these values, as the experiments we show later in this section show that the average latency can be as large as \SI{6}{\second}.
We note that high timeouts affect \ZZ{} more than \SBFT{}. In \ZZ{}, clients are waiting for timeouts during which they stop sending requests, which empties the pipeline at the primary, starving it from new request to propose. To alleviate such issues in real-world deployments of \ZZ{}, clients need to be able to precisely predict the latency to minimize the time the clients needs to wait between requests. Unfortunately, this is hard and runs the risk of ending up in the expensive slow path of \ZZ{} whenever the predicted latency is slightly off. In \SBFT{}, the collector may timeout waiting for threshold shares for the $k$-th round
while the primary can continues propose requests for future round $l$, $l > k$.
Hence, in \SBFT{} replicas have more opportunity to occupy themselves with useful work.

\hotstuff{} attains significantly low throughput due to its sequential primary-rotation model in which each of its primaries has to wait for the previous primary before proposing the next request, which leads to a huge reduction in its throughput.
Interestingly, \hotstuff{} incurs the least average latency among all protocols.
This is a result of intensive load on the system when running other protocols. 
As these protocols process several requests concurrently (see the multi-threaded architecture in Section~\ref{s:impl}), 
these requests spend on average more time in the queue before being processed by a replica.
Notice that all out-of-order consensus protocols employ this trade off: a small sacrifice on latency yields
higher gains on system throughput. 
 
In case of \PoE{}, its high throughputs under failures is a result of its 
three-phase linear protocol that does not rely on any twin-path model.
To summarize, \PoE{} attains up to $43\%$, $72\%$, $24\times$ and $62\times$ 
more throughputs than \pbft{}, \SBFT{}, \hotstuff{} and \ZZ{}.


\item {\bf \em No Replica Failure.}
We use Figures~\ref{fig:poe-plots}(c) and~\ref{fig:poe-plots}(d) to illustrate the throughput and latency attained 
by the system on running different consensus protocols in fault-free conditions.
These plots help us to bound the maximum throughput that can be attained by different 
consensus protocols in our system.

First, as expected, in comparison to the Figures~\ref{fig:poe-plots}(a) and~\ref{fig:poe-plots}(b), 
the throughputs for \PoE{} and \pbft{} are slightly higher. 
Second, \PoE{} continues to outperform both \pbft{} and \hotstuff{}, for the reasons described earlier.
Third, both \ZZ{} and \SBFT{} are now attaining higher throughputs as their clients and collector no longer timeout, respectively.
The key reason \SBFT's gains are limited  is because \SBFT{} requires five phases and becomes computation bounded.
Although \pbft{} is quadratic, it employs $\Name{MAC}$, which are cheaper to sign and verify.

Notice that the differences in throughputs of \PoE{} and \ZZ{} are small. 
\PoE{} has $20\%$ (on $91$ replicas) to $13\%$ (on $4$ replicas) less throughputs than \ZZ{}.
An interesting observation is that on $91$ replicas, \ZZ{} incurs almost the same latency as \PoE{}, even though it has higher throughput. This happens as clients in \PoE{} have to wait for only the fastest $\nf = 61$ replies, whereas a client for \ZZ{} has to wait for replies from all replicas (even the slowest ones). 
To conclude, \PoE{} attains up to $35\%$, $27\%$ and $21\times$ more throughput 
than \pbft{}, \SBFT{} and \hotstuff{}, respectively.

\end{enumerate}

\subsection{Scaling Replicas under Zero Payload}
\label{ss:scale-zero-payload}
We now measure the performance of different protocols under zero payload.
In any \BFT{} protocol, the primary starts consensus by sending a $\MName{Propose}$ 
message that includes all transactions.
As a result, this message has the largest size and is responsible for consuming the majority of the bandwidth.
A zero payload experiment ensures that each replica executes dummy instructions. 
Hence, the primary is no longer a bottleneck.

We again run these experiments for both {\bf \em Single Failure} and {\bf \em Failure-Free} cases, and
use Figures~\ref{fig:poe-plots}(e) to~\ref{fig:poe-plots}(h) to illustrate our observations.
It is evident from these figures that zero payload experiments have helped in increasing \PoE's gains.
\PoE{} attains up to $85\%$, $62\%$ and $27\times$ more throughputs than \pbft{}, \SBFT{} and \hotstuff{}, respectively.
In fact, under failure-free conditions, the throughput attained by \PoE{} is comparable to \ZZ{}.
This is easily explained.
First, both \PoE{} and \ZZ{} are linear protocols. 
Second, although in failure-free cases \ZZ{} attains consensus in one phase, its clients need to wait 
for response from all $\n$ replicas, which gives \PoE{} an opportunity to cover the gap.
However, \SBFT{} being a linear protocol does not perform as good as its other linear counterparts.
Its throughput is impacted by the delay of five phases.

\subsection{Impact of Batching under Failures}
\label{ss:batch-fail}
Next, we study the effect of batching client requests on \BFT{} protocols~\cite{pbftj,hotstuff}.
To answer this question, we measure performance as function of the number of requests in a batch ({\em the batch-size}), which we vary between $10$ and $400$. 
For this experiment, we use a system with $32$ available replicas, of which one replica has failed.

We use Figures~\ref{fig:poe-plots}(i) and~\ref{fig:poe-plots}(j) to illustrate, for each consensus protocol, the throughput and average latency attained by the system. For each protocol, increasing the batch-size also increases throughput, while decreasing the latency. 
This happens as larger batch-sizes require fewer consensus rounds to complete the exact same set of requests, reducing the cost of ordering and executing the transactions. This not only improves throughput, but also reduces client latencies as clients receive faster responses for their requests.
Although increasing the batch-size reduces the number of consensus rounds, the large message size 
causes a proportional decrease in throughput (or increase in latency). 
This is evident from the experiments at higher batch-sizes: increasing the batch-size beyond $100$ gradually curves the throughput plots towards a limit for \PoE{}, \pbft{} and \SBFT{}. 
For example, on increasing the batch size from $100$ to $400$, \PoE{} and \pbft{} see an increase in throughput by $60\%$ and 
$80\%$, respectively, while the gap in throughput reduces from $43\%$ to $25\%$.
As in the previous experiments, \ZZ{} yields a significantly lower throughput as it cannot handle failures. 
In case of \hotstuff{}, an increase in batch size does increases its throughput but due to high scaling of the graph 
this change seems insignificant. 


\subsection{Disabling Out-of-Ordering}
\label{ss:oop}
Until now, we allowed protocols like \pbft{}, \PoE{}, \SBFT{} and \ZZ{} to process requests {\em out-of-order}. 
As a result, these protocols achieve much higher throughputs than \hotstuff{}, which is restricted
by its sequential primary-rotation model.
In Figures~\ref{fig:poe-plots}(k) and~\ref{fig:poe-plots}(l), we evaluate the performance of the protocols when 
there are no opportunities for out-of-ordering.

In this setting, we require each client to only send its request when it has accepted a response for its previous query. 
As \hotstuff{} pipelines its phases of consensus into a {\em four}-phase pipeline, so we allow it to access four client requests 
(each on a distinct subsequent replica) at any time.
As expected, \hotstuff{} performs better than all other protocols at the expense of a higher latency 
as it rotates primaries at the end of each consensus, which allows it 
to pipeline four requests.
However, notice that once out-of-ordering is disabled, throughput drops from $\SI{200}{\kilo\text{transactions}\per\second}$ 
to just under a few {\em thousand} \si{\text{transactions}\per\second}. 
Hence, from a practical standpoint, out-of-ordering is simply crucial.
Further, the difference in latency of different protocols is quite small, and 
the visible variation is a result of graph scaling while the actual numbers are in the range of $\SI{20}{\milli\second}$--$\SI{40}{\milli\second}$.

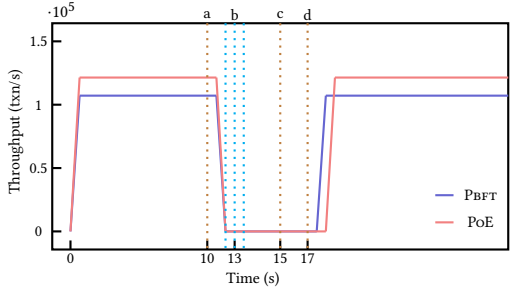
\begin{figure}
	
	\centering
	\begin{tikzpicture}[scale=1.2]
	\scriptsize
    \draw[thick] (0, 4) rectangle (5, 1.5);

	\draw[thick] (0,3.8) -- (0.1,3.8);
	\draw[thick] (0,3.1) -- (0.1,3.1);
	\draw[thick] (0,2.4) -- (0.1,2.4);
	\draw[thick] (0,1.7) -- (0.1,1.7);

	\node at (0.1,4.15) {$\cdot 10^5$};
	\node at (-0.15,3.8) {1.5};
	\node at (-0.15,3.1) {1};
	\node at (-0.15,2.4) {0.5};
	\node at (-0.15,1.7) {0};

	\node [rotate=90] at (-0.4,2.8) {Throughput (txn/s)};

	\draw[thick] (0.2,1.5) -- (0.2,1.6);
	\draw[thick] (1.7,1.5) -- (1.7,1.6);
	\draw[thick] (2.0,1.5) -- (2.0,1.6);

	\draw[thick] (2.5,1.5) -- (2.5,1.6);
	\draw[thick] (2.8,1.5) -- (2.8,1.6);


	\node at (0.2,1.4) {0};
	\node at (1.7,1.4) {10};
	\node at (2.0,1.4) {13};
	\node at (2.5,1.4) {15};
	\node at (2.8,1.4) {17};

	\node at (2.2,1.17) {Time (s)};

	\draw[thick,blue!70!black!60] (0.2,1.7) -- (0.3,3.2);
	\draw[thick,blue!70!black!60] (0.3,3.2) -- (1.8,3.2);
	\draw[thick,blue!70!black!60] (1.8,3.2) -- (1.9,1.7);
	\draw[thick,blue!70!black!60] (1.9,1.7) -- (2.9,1.7);
	\draw[thick,blue!70!black!60] (2.9,1.7) -- (3.0,3.2);
	\draw[thick,blue!70!black!60] (3.0, 3.2) -- (5,3.2);

	\draw[thick,red!90!black!50] (0.2,1.7) -- (0.3,3.4);
	\draw[thick,red!90!black!50] (0.3,3.4) -- (1.8,3.4);
	\draw[thick,red!90!black!50] (1.8,3.4) -- (1.9,1.7);
	\draw[thick,red!90!black!50] (1.9,1.7) -- (3.0,1.7);
	\draw[thick,red!90!black!50] (3.0,1.7) -- (3.1,3.4);
	\draw[thick,red!90!black!50] (3.1, 3.4) -- (5,3.4);

	\draw[thick,dotted,brown] (1.7,4) -- (1.7,1.5);
	\draw[thick,dotted,cyan] (1.9,4) -- (1.9,1.5);
	\draw[thick,dotted,cyan] (2.0,4) -- (2.0,1.5);
	\draw[thick,dotted,cyan] (2.1,4) -- (2.1,1.5);
	\draw[thick,dotted,brown] (2.5,4) -- (2.5,1.5);
	\draw[thick,dotted,brown] (2.8,4) -- (2.8,1.5);

	\node at (1.7,4.1) {a};
	\node at (2,4.1)   {b};
	\node at (2.5,4.1) {c};
	\node at (2.8,4.1) {d};
	
	\draw[thick,blue!70!black!60] (4.2,2.1) -- (4.4,2.1);
	\node at (4.7,2.1) {\pbft{}};

	\draw[thick,red!90!black!50] (4.2,1.8) -- (4.4,1.8);
	\node at (4.7,1.8) {\PoE{}};

	\end{tikzpicture}
    
    \caption{System throughput under instance failures ($\n=32$). 
	(a) replicas detect failure of primary and broadcast $\MName{vc-request}$;
	(b) replicas receives $\MName{vc-request}$ from others;
	(c) replicas receives $\MName{nv-propose}$ from new primary;
	(d) state recovery;
	}
  	\label{sfig:tput-vc}
     
\end{figure}

\subsection{Primary Failure--View Change}
In Figure~\ref{sfig:tput-vc}, we study the impact of of a benign primary failure on \PoE{} and \pbft{}. 
To recover from a primary failure, backup replicas run the view-change protocol.
We skip illustrating view-change plots for \ZZ{} and \SBFT{} as they already face severe reduction in throughput 
for a single backup failure. 
Further, \ZZ{} has an {\em unsafe} view-change algorithm and \SBFT's view-change algorithm is no less expensive than \pbft{}.
For \hotstuff{}, we do not show results as it changes primary at the end of every consensus.
Although single primary protocols face a momentary loss in throughput during view-change, these protocols 
easily cover this gap through their ability to process messages out-of-order.

For our experiments, we let 
the primary replica complete consensus for $\SI{10}{\second}$ (or around a million transactions)
and then fail. 
This causes clients to timeout while waiting for responses for their pending transactions. 
Hence, these clients forward their requests to backup replicas. 

When a backup replica receives a client request, it forwards that request to the primary and waits on a timer.
Once a replicas timeouts, it detects a primary failure and broadcasts a $\MName{vc-request}$ message to 
all other replicas---initiate view-change protocol (a).
Next, each replica waits for a new view message from the next primary.
In the meantime, a replica may receive $\MName{vc-request}$ messages from other replicas (b).
Once a replica receives $\MName{nv-propose}$ message from the new primary (c), it moves to the next view.

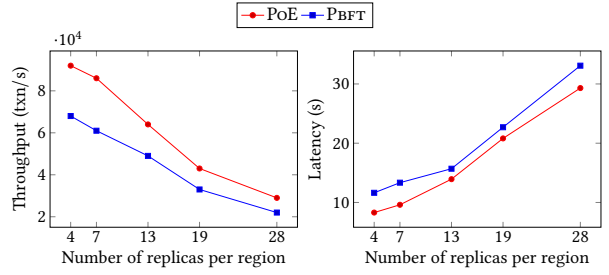
\begin{figure}

\centering
\scalebox{0.5}{\ref{wanlegend}}\\ 
\begin{tikzpicture}[plot]
    \begin{axis}[xlabel={Number of replicas per region},ylabel={\axistput},legend to name={wanlegend},legend columns=-1,width=240pt,xtick={4,7,13,19,28}]
        \addplot table[x={npr},y={POE}] {\dataTPutWAN};
        \addplot table[x={npr},y={PBFT}] {\dataTPutWAN};
        \legend{\PoE{},\pbft{}};
    \end{axis}
\end{tikzpicture}
\begin{tikzpicture}[plot]
    \begin{axis}[xlabel={Number of replicas per region},ylabel={\axislat},width=240pt,xtick={4,7,13,19,28}]
        \addplot table[x={npr},y={POE}] {\dataLatWAN};
        \addplot table[x={npr},y={PBFT}] {\dataLatWAN};
    \end{axis}
\end{tikzpicture}
\caption{{\changed System throughput and average latency incurred by \PoE{} and \pbft{} in a WAN deployment of five regions under a single failure. In the largest deployment, we have 140 replicas spread equally over these regions.}}
\label{fig:wan-plots}
\end{figure}

{\changed
\subsection{WAN Scalability}
In this section, we use Figure~\ref{fig:wan-plots} to illustrate the throughputs and latencies for different \PoE{} and \pbft{} deployments 
on a wide-area network in the presence of a single failure. 
In specific, we deploy clients and replicas across {\em five} locations across the globe: Oregon, Iowa, Montreal, the Netherlands, and 
Taiwan.
Next, we vary the number of replicas from $20$ to $140$ by equally distributing these replicas across each region.

These plots affirm our existing observations that \PoE{} outperforms existing state-of-the-art protocols and scales 
well in wide-area deployments. 
In specific, \PoE{} achieves up to $1.41\times$ higher throughput and incurs $28.67\%$ less latency than \pbft{}.
We skip presenting plots for \SBFT{}, \hotstuff{} and \ZZ{} due to their low throughputs under failures.
}

\begin{figure*}
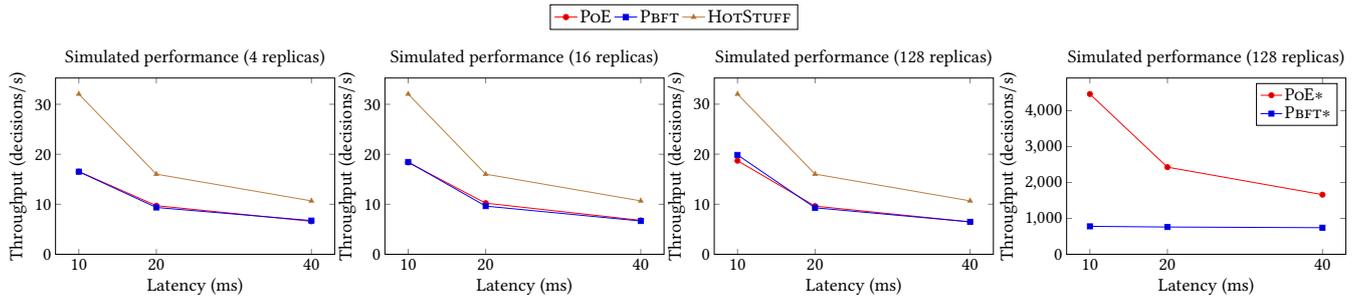

\vspace{-4mm}
\centering
\scalebox{0.5}{\ref{latlegend}}\\[5pt]
\makebox[0pt]{
\latgraph{\dataSimLatencyTPutFOUR}{Simulated performance (4 replicas)}%
\latgraph{\dataSimLatencyTPutSIXTEEN}{Simulated performance (16 replicas)}%
\latgraph{\dataSimLatencyTPutONETWOEIGHT}{Simulated performance (128 replicas)}%
\latspecialgraph{\dataSimLatencyTPutONETWOEIGHT}{Simulated performance (128 replicas)}}
\caption{The simulated number of consensus decisions  \PoE{}, \pbft{}, and \hotstuff{} can make as a function of the latency. Only the protocols in the right-most plot and marked with $*$ process requests out-of-order processing.}\label{fig:simlat}
\vspace{-4mm}
\end{figure*}

\subsection{Simulating \BFT{} Protocols}
\label{app:simulate}

To further underline that the \emph{message delay} and not \emph{bandwidth requirements} becomes a determining factor in the throughput of protocols in which the primary does not propose requests out-of-order, we performed a separate simulation of the maximum performance of \PoE{}, \pbft{}, and \hotstuff{}.
The simulation makes $500$ consensus decisions and processes all message send and receive steps, but delays the arrival of messages by a pre-determined message delay. The simulation skips any expensive computations and, hence, the simulated performance is entirely determined by the cost of message exchanges. We ran the simulation with $\n \in \{4, 16, 128\}$ replicas, for which the results can be found in Figure~\ref{fig:simlat}, first three plots.
As one can see, if bandwidth is not a limiting factor, then the performance of protocols that do not propose requests out-of-order will be determined by the number of communication rounds and the message delay. As both \pbft{} and \PoE{} have one communication round more than the two rounds of \hotstuff{}, their performance is roughly  two-thirds that of \hotstuff{}, this independent of the number of replicas or the message delay. Furthermore, doubling message delay will roughly half performance.
Finally, we also measured the maximum performance of protocols that do allow out-of-order processing of up to $250$ consensus decisions. These results can be found in Figure~\ref{fig:simlat}, last plot. As these results show, out-of-order processing increases performance by a factor of roughly $200$, even with 128 replicas.

\section{Related Work}
\label{s:related}
Consensus is an age-old problem that received much theoretical and practical attention (see, e.g.,~\cite{paxos,raft,zab}). 
Further, the use of rollbacks is common in distributed systems. E.g., the crash-resilient replication protocol Raft~\cite{raft} allows primaries to re-write the log of any replica. In a Byzantine environment, such an approach would delegate too much power to the primary, as they can maliciously overwrite transactions that need to be preserved.


The interest in practical \BFT{} consensus protocols took off with the introduction of  \pbft{}~\cite{pbftj}.
Apart from the protocols that we already discussed, there are 
some interesting protocols that achieve efficient consensus by requiring $5\f+1$ replicas~\cite{qu-bft,hq}.
However, these protocols have been shown to work only in the cases where transactions are non-conflicting~\cite{zyzzyva}.
%
%
Some other \BFT{} protocols~\cite{less-replica-2,minbft} 
suggest the use of {\em trusted components} to 
reduce the cost of \BFT{} consensus. 
These works require only $2\f+1$ 
replicas as the trusted component 
helps to guarantee a correct ordering.
The safety of these protocols relies on the security of trusted component. 
%
In comparison, \PoE{} does (i) not require extra replicas, (ii) not depend on clients, 
(iii) not require trusted components, and 
(iv) not need the two phases of quadratic communication required by \pbft{}.

As a promising future direction, Castro~\cite{pbftj} also suggested exploring speculative optimizations 
for \pbft{}, which he referred to as tentative execution. 
However, this lacked: (i) formal description, (ii) non-divergence safety property, 
(iii) specification of rollback under attacks, (iv) re-examination of the view change protocol, and (v) any actual evaluation.

{\bf \em Consensus for Blockchains:}
Since the introduction of Bitcoin~\cite{bitcoin}, the well-known cryptocurrency that 
led to the coining of the term blockchain, several new \BFT{} consensus protocols that cater to cryptocurrencies 
have been designed~\cite{pow,ppcoin}.
Bitcoin~\cite{bitcoin} employs the {\em Proof-of-Work}~\cite{pow} consensus protocol (\POW{}), which
is computationally intensive, achieves low throughput, and can cause forks (divergence) in the blockchain: separate chains can exist on non-faulty replicas, which in turn can cause {\em double-spending attacks}~\cite{bc-processing}.
Due to these limitations, several other similar algorithms have been proposed.
%
E.g., {\em Proof-of-Stake} (PoS)~\cite{ppcoin}, which is design such that any replica owning $n\%$ of the total 
resources gets the opportunity to create $n\%$ of the 
new blocks.
As PoS is resource driven, it can face attacks where replicas are incentivized to work simultaneously  on several forks of the blokchain, without ever trying to eliminate these forks.



There are also a set of interesting alternative designs 
such as ConFlux~\cite{conflux}, 
Caper~\cite{caper} and MeshCash~\cite{meshcash} 
that suggest the use of directed acyclic graphs (DAGs) to 
store a blockchain to improve the performance of Bitcoin. 
However, these protocols either rely on \POW{} or \pbft{} for consensus.
Meta-protocols such as RCC~\cite{rcc} and RBFT~\cite{rbft} run multiple \pbft{} consensuses in parallel. 
These protocols also aim at removing dependence on the consensus led by a single primary.
A recent protocol, PoV~\cite{pov}, provides fast \BFT{} consensus in a consortium architecture. 
PoV does this by restricting the ability to propose blocks among a subset of trusted replicas.

%
%
\PoE{} does not face the limitations faced by \POW{}~\cite{pow} and
PoS~\cite{ppcoin}.
The use of DAGs~\cite{conflux,caper,meshcash}, and sharding~\cite{ahl,rapidchain} is orthogonal to the design of \PoE{}. Hence, their use with \PoE{} can reap further 
benefits.
Further, \PoE{} can be employed by meta-protocols and does not restrict consensus to any subset of replicas.

\section{Conclusions}  
\label{s:concl}
We present Proof-of-Execution (\PoE{}), a novel Byzantine
fault-tolerant consensus protocol that guarantees safety and liveness 
and does so in only three linear phases.
\PoE{} decouples ordering from execution by allowing 
replicas to process messages out-of-order and execute client-transactions speculatively.
Despite these properties, \PoE{} ensures that all the replicas reach a single 
unique order for all the transactions.
Further, \PoE{} guarantees that if a client observes identical results of execution 
from a majority of the replicas, then it can reliably mark its transaction committed.
Due to speculative execution, \PoE{} may require replicas to revert executed transactions, however. 
To evaluate \PoE's design, we implement it in our \ExpoDB{} fabric. {\changed Our evaluation shows that \PoE{} achieves up-to-$80\%$ higher throughputs than existing \BFT{} protocols in the presence of failures.}


\balance
\bibliographystyle{ACM-Reference-Format}
\bibliography{refined}

\end{document}